\definecolor{mygreen}{rgb}{0,0.6,0}
\definecolor{mygray}{rgb}{0.5,0.5,0.5}
\definecolor{mymauve}{rgb}{0.58,0,0.82}
\tiny\color{mygray}, % the style that is used for the line-numbers
\newcommand*{\pcr}[1]{{\fontfamily{pcr}\selectfont #1}}
\newcommand{\circler}{\textsuperscript{\textregistered}}
\newtheorem{observation}{Observation}
\newcommand{\Z}{\mathbb{Z}}
\newcommand{\N}{\mathbb{N}}
\newcommand{\A}{\mathcal{A}}
\newcommand{\G}{\mathbb{G}}
\newcommand{\alice}{\ensuremath{s}\xspace}
\newcommand{\bob}{\ensuremath{r}\xspace}
\newcommand{\aliceElement}{\ensuremath{\alice}\xspace}
\newcommand{\bobElement}{\ensuremath{\bob}\xspace}
\newcommand{\aliceNRecords}{\ensuremath{N_{\alice}}\xspace}
\newcommand{\bobNRecords}{\ensuremath{N_{\bob}}\xspace}
\newcommand{\db}[1]{\ensuremath{D_{#1}}\xspace}
\newcommand{\party}[1]{\ensuremath{P_{#1}}\xspace}
\newcommand{\measure}[2]{\ensuremath{\mu(#1, #2)}\xspace}
\newcommand{\measureE}{\ensuremath{\mu}\xspace}
\newcommand{\indicator}[4]{\ensuremath{I^{#1}_{#2}(#3, #4)}\xspace}
\newcommand{\indicatorE}[2]{\ensuremath{I^{#1}_{#2}}\xspace}
\newcommand{\lshMatch}[2]{\ensuremath{\texttt{LSHMatch}(#1, #2)}\xspace}
\newcommand{\lshMatchE}{\ensuremath{\texttt{LSHMatch}}\xspace}
\newcommand{\recordSpace}{\ensuremath{\mathcal{R}}\xspace}
\newcommand{\jaccard}[2]{\ensuremath{J(#1, #2)}\xspace}
\DeclarePairedDelimiterXPP\E[1]{\mathbb{E}}{[}{]}{}{

#1
}
\definecolor{orcidlogocol}{HTML}{A6CE39}
\tikzset{
  orcidlogo/.pic={
    \fill[orcidlogocol] svg{M256,128c0,70.7-57.3,128-128,128C57.3,256,0,198.7,0,128C0,57.3,57.3,0,128,0C198.7,0,256,57.3,256,128z};
    \fill[white] svg{M86.3,186.2H70.9V79.1h15.4v48.4V186.2z}
                 svg{M108.9,79.1h41.6c39.6,0,57,28.3,57,53.6c0,27.5-21.5,53.6-56.8,53.6h-41.8V79.1z M124.3,172.4h24.5c34.9,0,42.9-26.5,42.9-39.7c0-21.5-13.7-39.7-43.7-39.7h-23.7V172.4z}
                 svg{M88.7,56.8c0,5.5-4.5,10.1-10.1,10.1c-5.6,0-10.1-4.6-10.1-10.1c0-5.6,4.5-10.1,10.1-10.1C84.2,46.7,88.7,51.3,88.7,56.8z};
  }
}
\newcommand{\OrigHeightRecip}{0.00390625}
\newlength{\curXheight}
\DeclareRobustCommand\orcid[1]{%
\texorpdfstring{%
\setlength{\curXheight}{\fontcharht\font`X}%
\href{https://orcid.org/#1}{\XeTeXLinkBox{\mbox{%
\begin{tikzpicture}[yscale=-\OrigHeightRecip*\curXheight,
xscale=\OrigHeightRecip*\curXheight,transform shape]
\pic{orcidlogo};
\end{tikzpicture}%
}}}}{}}
\renewcommand{\paragraph}[1]{{\smallskip\noindent {\bf{#1}}~}}
\newacronym{BDL}{BDL}{blind data linkage}
\newacronym{BFE}{BFE}{bloom filter encodings}
\newacronym{DH}{DH}{Diffie-Hellmann}
\newacronym{ER}{ER}{Entity resolution}
\newacronym{HE}{HE}{homomorphic encryption}
\newacronym{LAN}{LAN}{local area network}
\newacronym{WAN}{WAN}{wide area networks}
\newacronym{LSH}{LSH}{local sensitive hash}
\newacronym{LPH}{LPH}{locality preserving hash}
\newacronym{SSN}{SSN}{social security number}
\newacronym{OT}{OT}{oblivious transfer}
\newacronym{PPRL}{PPRL}{privacy-preserving record linkage}
\newacronym{PSI}{PSI}{private set intersection}
\newacronym{PPT}{PPT}{probabilistic polynomial-time}
\newacronym{RL}{RL}{record linkage}
\newacronym{QID}{QID}{quasi-identifier}
\begin{document}

\author{%
Allon Adir\inst{1}\orcid{0000-0001-8128-6706} \and 
Ehud Aharoni\inst{1}\orcid{0000-0002-3647-1440} \and
Nir Drucker\inst{1}\orcid{0000-0002-7273-4797} \and
Eyal Kushnir\inst{1}\orcid{0000-0001-6123-0297}  \and
Ramy Masalha\inst{1}\orcid{0000-0002-6808-5675}  \and
Michael Mirkin\inst{2}\thanks{The work for this paper was done while Michael Mirkin was with IBM Research.}\orcid{0000-0002-7332-7667} \and
Omri Soceanu\inst{1}\orcid{0000-0002-7570-4366}}

\authorrunning{ }%

\institute{$^1$IBM Research, Haifa, Israel} 
\institute{IBM Research - Haifa \and
Technion - Israel Institute of Technology}

\title{Privacy-preserving record linkage using local sensitive hash and private set intersection}

\titlerunning{LSH-PSI PPRL}

\maketitle

\begin{abstract}
The amount of data stored in data repositories increases every year. This makes it challenging to link records between different datasets across companies and even internally, while adhering to privacy regulations. Address or name changes, and even different spelling used for entity  data, can prevent companies from using private deduplication or record-linking solutions such as private set intersection (PSI). 
To this end, we propose a new and efficient privacy-preserving record linkage (PPRL) protocol that combines PSI and local sensitive hash (LSH) functions, and runs in linear time. We explain the privacy guarantees that our protocol provides and demonstrate its practicality by executing the protocol over two datasets with $2^{20}$ records each, in $11-45$ minutes, depending on network settings.

\keywords{Privacy-Preserving Record Linkage, Entity Resolution, Private Set Intersection, Local Sensitive Hash, Information privacy,
Data security and privacy, Secure two-party computations}
\end{abstract}

\section{Introduction}\label{sec:intro}
\gls{ER} is the process of identifying similar entities in several datasets, where the datasets may belong to different organizations. While these organizations would like to join hands and analyzes the behavior of matching customers, they may be restricted by law from sharing sensitive client-data such as medical, criminal, or financial information. The problem of matching records in two or more datasets without revealing additional information is called \gls{PPRL} \cite{PPRL-old} or \gls{BDL} \cite{old-2} and is the focus of this paper. A survey of \gls{PPRL} methods is available in \cite{pprl-survey21}. 
The importance of finding efficient and accurate \gls{PPRL} solutions can be observed, for example, in the establishment of a special task team by the Interdisciplinary Committee of the International Rare Diseases Research Consortium (IRDiRC) to explore different \gls{PPRL} approaches \cite{task-force}.

The \gls{PPRL} problem is a generalization of the well-studied \gls{PSI} problem in which two parties with different datasets would like to know the intersection or the size of the intersection of these datasets without revealing anything else about their data to the other party. Examples for \gls{PSI} solutions include \cite{psi-orig, psi-dh, psi-ot-benny, psi-he-unbalanced, psi-he, labled-psi}. With \gls{PSI}, the two parties compute the intersection of their respective sets, which can be used to identify matches by looking for records that share the same identifying field e.g., \gls{PSI} over \glspl{SSN}. However, in reality, such identifying fields do not always exist, and even when they do exist, their content may be entered incorrectly or differently. For example, consider two parties that perform \gls{PSI} on entity names. A single user may register himself in different systems under the names: `John doe', `John P Doe', `john doe', just `John', or even `Jon ode' by mistake. A general \gls{PPRL} solution may attempt to consider all of the above names as matching. 

In some cases, more than one data field is used to match two records, e.g., first name, last name, addresses, and dates of birth. These fields are known as \glspl{QID}, which may hold private information. In this paper, we assume that the parties are allowed to learn data by matching \glspl{QID}. In other cases, one can use a masking method e.g., as in \cite{Kargupta2005} to maintain the users' privacy. 

Non-exact matching is commonly performed using \gls{ER} solutions that employ a \gls{LSH} function. Unlike cryptographic hash functions, this technique permits collisions by deliberately hashing similar inputs to a single digest. For example, consider a hash function that hashes all the above names to a single digest value or to lists of digests with non-empty intersection. Different \gls{LSH} functions with different parameters allow us to fine-tune the results in different ways. We provide more details in Section \ref{sec:LSH}. 

Unfortunately, few practical protocols exist that can securely perform such ``fuzzy'' record linkage without revealing some private data of the parties, and do so in a linear time frame. See Section \ref{sec:rel} for a review of the different approaches. Many involve a third-party (e.g., \cite{federal}), which we aim to avoid, while other works do not provide a thorough leakage analysis that would help  evaluate the security of the solution. To this end, we constructed a new and efficient \gls{PPRL} solution that runs in $\mathcal{O}(n)$. We describe its performance and discuss its security characteristics. 

The goal of our solution is to compose a \gls{PSI} with an \gls{LSH} function. The dataset fields are first locally hashed by both parties using the \gls{LSH} and then checked for matches using \gls{PSI}. The choice of \gls{PSI} algorithm can only affect the performance (latency and bandwidth) of our solution but does not affect the amount of leaked information that can be tuned using the different parameters of the \gls{LSH}. Figure \ref{fig:illustration} illustrates a high-level view of our solution. 

\paragraph{Our contribution.}
Our contributions can be summarized as follows:
\begin{itemize}
    \item We introduce a novel and efficient \gls{PPRL} protocol that combines \gls{LSH} and \gls{PSI}, and analyze its security against semi-honest adversaries. It does not involve third parties. Specifically, due to the use of \gls{LSH}, our protocol has a low probability of revealing the data of non-matched records and thereby provides better privacy guarantees.
    \item We implemented the model and suggest several lower-level and higher-level optimizations.
    %\item Our implementation introduces a hybrid approach that combines exact matching and similar matching into one protocol based on a configurable set of rules.
    \item We evaluated our implementation over a dataset with $2^{20}$ records and demonstrated its practical advantage when the execution took $11-45$ minutes, depending on network settings.
    \item Our program is freely available for testing at \cite{helayers}.
\end{itemize}

\begin{figure}[h!]
    \centering
    \includegraphics[width=0.8\linewidth]{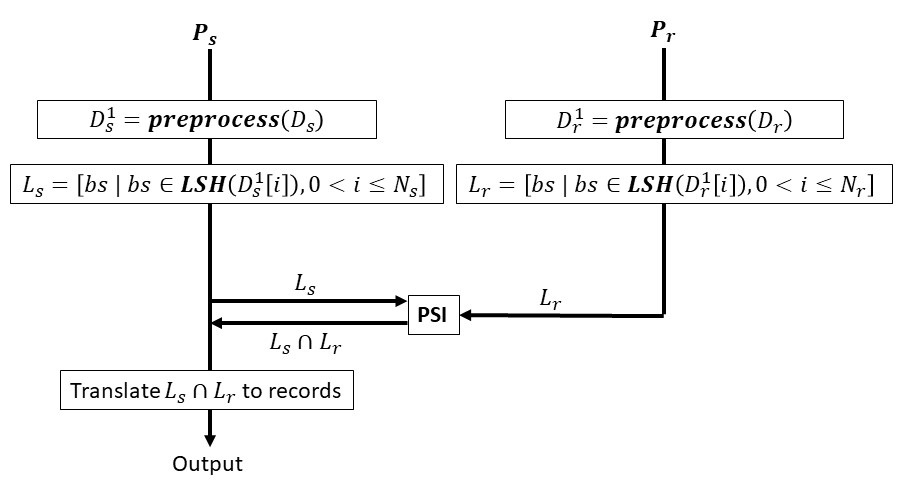}
    \caption{A high level illustration of our \gls{PPRL} protocol. The parties \party{\alice} and \party{\bob} hold datasets \db{\alice} and \db{\bob}. They preprocess the data for every record and then feed the results into an \gls{LSH} that outputs an ordered list of digest vectors $L_{\alice}$ and $L_{\bob}$, respectively. These are fed into a \gls{PSI} black box. Finally, \party{\alice} translates the \gls{PSI} output to the matching record IDs.}
    \label{fig:illustration}
\end{figure}

\subsection{Related work}\label{sec:rel}

To demonstrate our solution, we use a \gls{PSI} instantiation that uses public-key cryptography; specifically, we use one that leverages the commutative properties of the \gls{DH} key agreement scheme. This \gls{PSI} construction was introduced in \cite{psi-dh} with a similar construction even before that in \cite{psi-orig}. Subsequent \gls{PSI} works consider other, more complex cryptographic primitives such as \gls{HE} \cite{psi-he} and \gls{OT} \cite{psi-ot-benny}. While the latter solutions may offer an interesting tradeoff in terms of performance and security, we decided to stick with the basic \gls{DH}-style protocol due to its simplicity and the fact that its primitives were already standardized \cite{fips-80056b}. 
%This allowed us to make a product that can be used today by organizations that need to comply with government regulations such as FIPS 140-2 \cite{fips-1402} \ndcomment{to verify compliance}. 
Because we use \gls{PSI} as a blackbox, we can also benefit from most of the advantages that the other methods provide such as performance and security guarantees. 

Our solution follows previous works in considering a \textit{balanced} case, where the two datasets are roughly equal in size. An example, for a \gls{PSI} over unbalanced sets was studied in \cite{psi-he-unbalanced}. In fact, there were attempts to use \gls{PSI} for \gls{PPRL} before this paper. However, they were either noted to be inefficient \cite{Vatsalan2017} or relied on a different techniques such as term frequency–inverse document frequency (TF-IDF) \cite{ravikumar2004secure}, which is more appropriate for comparing documents, rather than short record fields (such as names or addresses). Furthermore, the protocol of \cite{ravikumar2004secure} can only compare given record pairs. This implies the need for $\mathcal{O}(n^2)$ operations, in contrast to our method, which requires $\mathcal{O}(n)$ operations.

A complete survey of \gls{PPRL} techniques and challenges is available at \cite{Vatsalan2017,pprl-survey21}, in which we observed solutions that use different cryptographic primitives. For example, \cite{WONG20131280, Essex-HE} 
%\cite{Bonomi2020}
relies on \gls{HE}, which is known for its high computational cost. For example, \cite{Essex-HE} reports that it took somewhat less than two hours to evaluate $20,000$ patient records, which is less records than in our evaluations by several orders of magnitude. Other works \cite{GarbledC, SALEEM20181} use garbled circuits, which can still be inefficient, while other multi-party computation solutions such as \cite{SFour} can incur high communication costs \cite{chen2020current}. Another example is the fuzzy volts approach, which uses secure polynomial interpolations \cite{PPRL3}, but only reports results for around $1,000$ records. Other solutions \cite{dp-pplr2, dp-non-crypto} overcome the leakage issue by using differential privacy, which anonymizes the data to maintain privacy. We see it as an orthogonal approach to ours.

Many \gls{PPRL} works use Bloom filter encodings \cite{Schnell2009}, which use a \gls{LPH} function over the data. The main advantage of the Bloom filter is speed. The difference between \gls{LPH} and \gls{LSH} is that \gls{LPH} is data-dependent, i.e., for three records $p,q,r$, a metric $d$, and an \gls{LPH} function $l_p$
\[
d(p, q) < d(q,r) \Longrightarrow d(l_p(p), l_p(q)) < d(l_p(q),l_p(r))
\]
This relation complicates the evaluation of the protocol leakage. The lack of a formal analysis for Bloom filter based solutions caused several attacks on them \cite{cryptanalysis-bf, cryptanalysis-bf2, cryptanalysis-bf1, cryptanalysis-bf3}. A survey of attacks and countermeasures for this method can be found in \cite{Franke2021}. Our solution's use of \gls{LSH} has an advantage over Bloom filters as it is data-independent and more robust against the above attacks. A method that combines Bloom filters and \gls{LSH} was presented in \cite{Franke2018, Karapiperis2014}. In contrast to this one, our solution only uses \gls{LSH}, which simplifies the privacy analysis. Moreover, our use of \gls{PSI} hides the \gls{LSH} output and thus prevents offline attacks. In addition, \cite{Karapiperis2014} requires use of a third-party and demonstrates a solution that took more than an hour to match $300K$ records. Another recent example is \cite{SFour}, which runs in $\mathcal{O}(n \cdot polylog(n))$ and proved to be cryptographically secure in the semi-honest security model. However, the method analyzed $4,096$ records in $88$ minutes and it is not clear whether this method can scale to handle more than $100K$ records. 

\paragraph{Organization.}
The paper is organized as follows. Section \ref{sec:pre} provides some background notation and describes the required preliminaries for this work. In Section \ref{sec:pprl} we present and discuss several possible definitions of \gls{PPRL} protocols.
We provide a high level description of our solution in Section \ref{sec:solution} and provide further details about our implementation in Appendix \ref{sec:impl}. We report our experimental setup and results in Section \ref{sec:expr} and conclude in Section \ref{sec:conc}.

\section{Preliminaries and notation}\label{sec:pre}
We denote the concatenation of two strings by $s_1 ~|~ s_2$. The function $Eq(s,r)$ returns 1 when two strings are equal and 0 otherwise. An ordered list of elements $A$ is marked with square brackets, e.g., $A=[5,3,8]$ and we access its $i$th element by $A[i]$. A permutation $\pi$ can either return a permuted list when operating on an ordered list, or the index of a permuted element within that list when the input is another index. For example, let $\pi:x \mapsto x + 1 \pmod{4}$ be a permutation, then $\pi([5,6,7,8]) = [8, 5, 6, 7]$, $\pi(2)=3$, and $\pi(3)=0$. Uniform random sampling from a set $U$ is denoted by $u \xleftarrow{\$} U$.

\subsection{Entity resolution (ER)}\label{sec:ER}
An \gls{ER} method gets as input two datasets of $\aliceNRecords$ and $\bobNRecords$ records from record spaces $\recordSpace$: ${\db{\alice} = \{\aliceElement_1, \aliceElement_2, \dots, \aliceElement_{\aliceNRecords}\}}$ and ${\db{\bob} = \{\bobElement_1, \bobElement_2, \dots, \bobElement_{\bobNRecords}\}}$, respectively. It evaluates the similarity of every two records using a similarity measure $\mu : \recordSpace \times \recordSpace \rightarrow [0,1]$ and an associated similarity indicator
\begin{align*}
I^\mu_t : \recordSpace \times \recordSpace & \longrightarrow \{0,1\} \\
(\aliceElement, \bobElement) & \longmapsto 
\begin{cases}
    1 & \measure{\aliceElement}{\bobElement} \ge t \\
    0 & otherwise
\end{cases}
\end{align*}

The \gls{ER} method uses the similarity indicator to facilitate a bipartite graph $G=(U, V, E)$, where the nodes of $U$, $V$ are the records of \db{\alice}, \db{\bob}, respectively, and for every two nodes ($u \in U$, $v \in V$), an edge exists in $E$ if \indicator{\mu}{t}{u}{v}=1.

\paragraph{PPRL.} Informally, a \gls{PPRL} protocol is an \gls{ER} method executed by two parties: a sender \party{\alice} and a receiver \party{\bob}, who privately hold \db{\alice} and \db{\bob}, respectively. At the end of the protocol, \party{\bob} learns the similarity edges $E$ while \party{\alice} learns nothing. We provide a formal definition in Section \ref{sec:pprl}. Specifically, our \gls{PPRL} solution uses the \gls{LSH} and \gls{PSI} primitives, described next.

\subsection{Local sensitive hash (LSH)}\label{sec:LSH}
An \gls{LSH} \cite{ullman} is a hash function that deliberately hashes \emph{similar} inputs to the same output hash value. We are interested in the similarity of strings i.e., the content of the record fields. Therefore, we use the \gls{LSH} from \cite{ullman}, which is based on the Jaccard index and on \textit{Min-Hashes}, as demonstrated in Figure \ref{fig:LSH}.

\begin{figure}[ht!]
    \centering
    \includegraphics[width=0.7\linewidth]{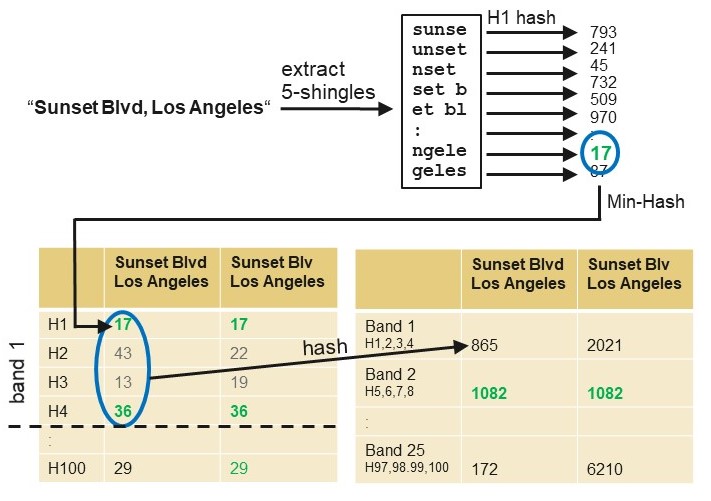}
    \caption{Computing the \gls{LSH} for a string: shingles are extracted from the normalized string, and then min-hashes are evaluated and grouped into bands that are hashed to a list of signatures.}
    \label{fig:LSH}
\end{figure}

\emph{Jaccard index} (a.k.a. the Jaccard similarity coefficient) is a similarity measure for strings. The procedure for computing the Jaccard index of two inputs strings $(s, r)$ splits each normalized string into the set of all overlapping sub-strings of given lengths, termed \emph{$k$-shingles} (or \emph{$k$-grams}), where $k$ is the length of the sub-strings. We use small letters to denote strings or the corresponding records, and capital letters to denote their associated sets of $k$-shingles. The Jaccard index for records \aliceElement, \bobElement is
\begin{align}\label{eq:Jaccard}
J(\aliceElement, \bobElement) = \frac{\abs{S \cap R}}{\abs{S \cup R}}
\end{align}
when the context is clear we use $J$ instead of $J(\aliceElement, \bobElement)$.

\begin{example}
Consider the strings:
\begin{align*}
& \aliceElement = \texttt{`Sun\textbf{s}et Blvd, Los Angeles'} \\
& \bobElement = \texttt{`Sunet Blvd, Los Angeles'}
\end{align*}
that are normalized into 
\begin{align*}
& \texttt{`sun\textbf{s}et blvd los angeles'} \\
& \texttt{`sunet blvd los angeles'}
\end{align*}
and then split into the set of $19$ and $18$ shingles of length $k=5$, respectively: 
\begin{align*}
S =\{& \texttt{`sunse', `unset', `nset ', `set b', ..., `ngele', `geles'}\} \\
R =\{& \texttt{`sunet', `unet ', `net b', `et bl', ..., `ngele', `geles'}\} 
\end{align*}
Here, the Jaccard index is $J=\frac{15}{22} \approx 0.68$. Using longer shingles of length $k=11$ would result in a lower Jaccard index of $J=0.56$.
\end{example}

It is possible to instantiate a \gls{PPRL} solution that relies on the Jaccard index. The drawback of such a protocol is that it has quadratic complexity in the size of the datasets. For linear complexity, we use Min-Hash.

\begin{definition}[Min-Hash \cite{ullman}]
For a collision-resistant hash function $H$ with an integer output digest and an integer $k$, a Min-Hash function receives a string $s$ as input, converts it to a $k$-shingles set $S$, and returns 
\begin{align*}
    \texttt{MinH}^H_k(s) = \min_{e \in S}{H(e)}
\end{align*}
\end{definition}
When the context is clear we write MinH instead of MinH$^H_k$.

\begin{observation}[\hspace{1sp}\cite{ullman}]
\label{obs:mihashjac}
For two normalized records \aliceElement and \bobElement, a collision resistant hash function $H$, and $k>0$, it follows that $Pr\left[MinH^H_k(\aliceElement)=MinH^H_k(\bobElement)\right] = \jaccard{\aliceElement}{\bobElement}$.
\end{observation}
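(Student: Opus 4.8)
The plan is to reduce this probabilistic statement to an elementary counting argument over the union of the two shingle sets. First I would fix the records \aliceElement and \bobElement together with their associated $k$-shingle sets $S$ and $R$, so that the only source of randomness is the hash function $H$; accordingly the probability is taken over the choice of $H$, which (as is standard for Min-Hash) we model as inducing a uniformly random ordering on any finite set of inputs. The first key observation is that, since $H$ is collision resistant, the values $\{H(e) : e \in S \cup R\}$ are pairwise distinct, so there is a well-defined unique element $e^* \in S \cup R$ attaining the global minimum hash value.

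Next I would characterize the event $MinH^H_k(\aliceElement) = MinH^H_k(\bobElement)$ in terms of $e^*$. Since $MinH^H_k(\aliceElement) = \min_{e \in S} H(e)$ and $MinH^H_k(\bobElement) = \min_{e \in R} H(e)$, and since all hash values on $S \cup R$ are distinct, the two minima coincide if and only if the global minimizer $e^*$ lies in both sets, i.e. $e^* \in S \cap R$. Indeed, if $e^* \in S \cap R$ then it simultaneously minimizes over $S$ and over $R$; whereas if $e^*$ belongs to only one of the sets, then the minimum over the other set is attained by a different element whose hash value is strictly larger, hence different, so the two minima disagree.

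Finally I would compute the probability of the event $e^* \in S \cap R$. Because $H$ induces a uniformly random ordering on the $\abs{S \cup R}$ distinct elements, each element of $S \cup R$ is equally likely to be the one with the smallest hash value. Hence the probability that this minimizer falls inside $S \cap R$ is exactly $\frac{\abs{S \cap R}}{\abs{S \cup R}}$, which by \eqref{eq:Jaccard} equals \jaccard{\aliceElement}{\bobElement}.

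The main subtlety to handle carefully is the modeling of $H$: the clean equality requires that $H$ behave like a random permutation on $S \cup R$, yielding both the absence of ties and a uniformly distributed minimizer. With a concrete collision-resistant hash this holds only up to a negligible error coming from the (computationally hard to find) collisions, so the equality is understood in this idealized sense; once that modeling assumption is in place, the remaining work is just the three-line counting argument above.
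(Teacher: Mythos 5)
Your proposal is correct and coincides with the canonical proof: the paper states this Observation without proof, deferring to \cite{ullman}, and the argument given there is precisely your counting argument --- under the random-ordering idealization the element of $S \cup R$ achieving the global minimum hash is uniform over $S \cup R$, and the two minima coincide exactly when it lies in $S \cap R$, giving probability $\abs{S \cap R}/\abs{S \cup R} = \jaccard{\aliceElement}{\bobElement}$. Your closing caveat --- that collision resistance alone only yields this equality up to the idealization of $H$ as a random permutation (no ties, uniformly distributed minimizer) --- is a point the paper glosses over, and is handled correctly.
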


An \gls{LSH} involves applying $P$ different Min-Hash functions to a string $s$. The outputs are split into $B$ \emph{bands} of $R$ digests ($P = B R$). The concatenation of the $R$ digests of each band is again hashed to produce the \emph{signature} of the band, where the same signature hash function is used for all bands. An \gls{LSH} output is a tuple with these band signatures. 

\begin{definition}[LSH]
For $k,R,B \in \N$, $P=RB$, distinct collision-resistant hash functions $H_i$, $1 \le i \le P$ and another collision-resistant hash functions $G$, a band $b_j$, $1 \le j \le B$ over a string $s$ is the concatenation
\begin{align*}
    b_j(s) = \texttt{MinH}^{H_{R \cdot (j-1)+1}}_k(s)~|~\cdots~|~ \texttt{MinH}^{H_{R \cdot (j-1) + R}}_k(s)
\end{align*}
and the LSH output over a string $s$ is the ordered list
\begin{align*}
    LSH(s) = \left[
    G\left(b_1(s)\right),
    G\left(b_2(s)\right),
    \ldots,
    G\left(b_{B}(s)\right)
    \right]
\end{align*}
\end{definition}
 
Two \gls{LSH} tuples are considered to be a match if they share at least one common signature. We denote this by the indicator function
\begin{align*}
\texttt{LSHMatch} : \recordSpace \times \recordSpace & \longrightarrow \{0,1\} \\
(\aliceElement, \bobElement) & \longmapsto 
\begin{cases}
1 & 1 \le \sum\limits_{i=1}^{B}{Eq\left(LSH(\aliceElement)[i], LSH(\bobElement)[i]\right)} \\
0 & otherwise
\end{cases}
\end{align*}

\begin{observation}[\hspace{1sp}\cite{ullman}]
For two records \aliceElement, \bobElement,
\begin{align}\label{eq:P_J}
  Pr[\lshMatch{\aliceElement}{\bobElement}=1] = 1-(1-J^R)^B
\end{align}
\end{observation}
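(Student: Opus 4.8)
The plan is to reduce the event $\lshMatch{\aliceElement}{\bobElement}=1$ to a disjunction of per-band matching events and then compute its probability using the single--Min-Hash collision probability supplied by Observation~\ref{obs:mihashjac} together with an independence argument. First I would fix the two records, set $J=\jaccard{\aliceElement}{\bobElement}$, and isolate a single band $b_j$. By construction $b_j(\aliceElement)$ is the concatenation of the $R$ Min-Hash values produced by the distinct hash functions $H_{R(j-1)+1},\dots,H_{R(j-1)+R}$. Since $G$ is collision resistant, up to a negligible failure probability I can identify the signature-level event $\{G(b_j(\aliceElement))=G(b_j(\bobElement))\}$ with the content-level event $\{b_j(\aliceElement)=b_j(\bobElement)\}$; the latter holds precisely when all $R$ constituent Min-Hashes agree.

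Next I would evaluate the probability that band $j$ matches. By Observation~\ref{obs:mihashjac}, each individual agreement $\{\texttt{MinH}^{H_i}_k(\aliceElement)=\texttt{MinH}^{H_i}_k(\bobElement)\}$ occurs with probability $J$. Treating the $P=RB$ hash functions as independent (the standard idealized model), the $R$ agreement events inside a band are mutually independent, so $Pr[b_j(\aliceElement)=b_j(\bobElement)]=J^R$ and consequently the band fails to match with probability $1-J^R$.

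Then I would lift this to all $B$ bands. Because distinct bands are built from disjoint sets of hash functions, the band-match events are themselves mutually independent. The event $\lshMatch{\aliceElement}{\bobElement}=1$ is exactly ``at least one band matches,'' whose complement is ``every band fails''; by independence the latter has probability $\prod_{j=1}^{B}(1-J^R)=(1-J^R)^B$. Taking complements yields $Pr[\lshMatch{\aliceElement}{\bobElement}=1]=1-(1-J^R)^B$, as claimed.

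The step I expect to be the main obstacle is justifying the two independence claims and the reduction through $G$ cleanly: the exact identity $Pr=J$ of Observation~\ref{obs:mihashjac} and the independence of distinct $H_i$ hold only in an idealized random-hash model, so I would state this modeling assumption explicitly, and I would note that the collision resistance of $G$ contributes only a negligible additive error when identifying signature matches with band-content matches, which is why the clean closed form is attainable.
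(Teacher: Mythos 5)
Your argument is correct and takes essentially the same route as the source: the paper states this observation without proof, citing \cite{ullman}, and the standard derivation there is exactly your computation --- each band matches with probability $J^R$ by independence of its $R$ Min-Hashes (each agreeing with probability $J$ by Observation~\ref{obs:mihashjac}), all $B$ bands fail with probability $(1-J^R)^B$ by independence across disjoint hash families, and the complement gives the claim. Your explicit caveats --- that the identity holds in an idealized random-hash model and that collision resistance of $G$ lets you identify signature equality with band-content equality up to negligible error --- are sensible refinements of the same argument rather than a different approach.
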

%Figure \ref{fig:P_J_b} presents several probability functions for different values of $B$ and $R$.

\begin{example}
Figure~\ref{fig:LSH} demonstrates an \gls{LSH} with $P=100$, $B=25$, $R=4$, where $\texttt{MinH}^{H_1}_5(s_1) = 17$, $\texttt{MinH}^{H_2}_5(s_1) = 43$, etc.
Subsequently, every sequence of $R$ digests is concatenated and hashed to produce a band signature, with a total of $B$ band signatures, which form the LSH of $s_1$, $LSH(s_1)=(865, 1082,\ldots, 172)$. Repeating the process for $s_2$, we observe a match in the signature of the second band for the two compared strings; this means that the two LSHs match and the strings match with a high probability.
\end{example}

\subsection{Private set intersection (PSI)}\label{sec:psi}
\gls{PSI} is a cryptographic protocol that allows two parties to compute the intersection of their private sets without revealing anything beyond this fact or  beyond the size of the intersected sets to the other party. \gls{PSI} is a special case of \gls{PPRL}, which considers only exact matches. Some variations of \gls{PSI} allow the parties to learn just the cardinality of the intersection. 

Many \gls{PSI} solutions exist (see Section \ref{sec:intro}). In this work, we use a unidirectional variant of the \gls{DH}-\gls{PSI} \cite{psi-orig}, as presented in Figure \ref{fig:dhpsi}. The two parties \party{\alice} and \party{\bob} first agree on a group $\G$ and a collision-resistant hash function $H$, and each party generates its own secret key $sk_\aliceElement$ and $sk_\bobElement$, respectively. Subsequently, both parties hash and encrypt their records using their private keys and send them to the other party. In addition, \party{\bob} encrypts the output of \party{\alice} using its secret key and sends the results back to \party{\alice}. Finally, \party{\alice} learns the intersection of the two datasets.

\begin{figure}[H]
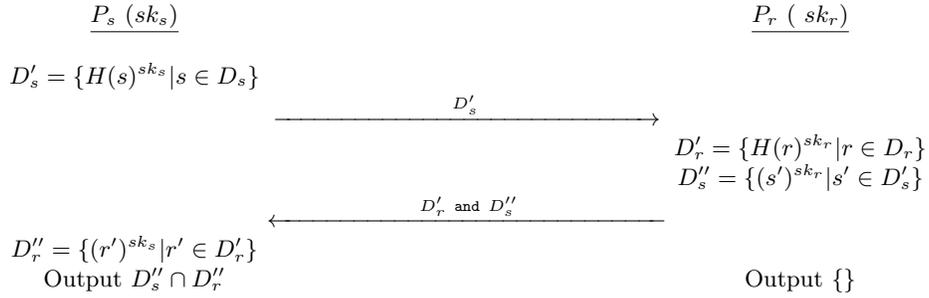

    \centering
    \begin{tabular}{ccc}
        \underline{\party{\alice}($sk_\aliceElement$)} & & \underline{\party{\bob}( $sk_\bobElement$)} \\ &&\\

        $\db{\alice}' = \{H(\aliceElement)^{sk_\aliceElement} |  \aliceElement \in \db{\alice}\}$ & & \\

        & $\xrightarrow{\hspace*{2.3cm} \db{\alice}' \hspace*{2.3cm}}$ & \\

        & & $\db{\bob}' = \{H(\bobElement)^{sk_\bobElement} |  \bobElement \in \db{\bob}\}$ \\
        & & $\db{\alice}'' = \{(\aliceElement')^{sk_\bobElement} | \aliceElement' \in \db{\alice}'\}$\\

        & $\xleftarrow{\hspace*{1.9cm} \db{\bob}' \texttt{ and } \db{\alice}'' \hspace*{1.9cm}}$ & \\

        $\db{\bob}'' = \{(\bobElement')^{sk_\aliceElement} | \bobElement' \in \db{\bob}'\}$ \\
        Output $\db{\alice}'' \cap \db{\bob}''$ & & Output $\{\}$
    \end{tabular}
    \caption{One side \gls{DH}-\gls{PSI}}
    \label{fig:dhpsi}
\end{figure}

Informally, the security of these protocols against semi-honest adversaries is guaranteed by the one-way property of the hash function, the computational hardness of the decisional \gls{DH}, and the one-more-\gls{DH} \cite{one-more} assumptions (see definitions in Appendix \ref{app:secdef}). The decisional \gls{DH} is used to hide the data in transit from eavesdroppers, while the one-more-\gls{DH} assumption is used to prevent  \party{\alice} from generating new records in the name of \party{\bob}.

One \gls{DH}-\gls{PSI} variant is the mutual \gls{DH}-\gls{PSI}, which includes one extra round: \party{\alice} sends $\db{\bob}''$ to \party{\bob} so that \party{\bob} can also compute the intersection. However, here an eavesdropper learns both $\db{\alice}''$ and $\db{\bob}''$ and can therefore learn the cardinality of the intersection $\db{\alice}'' \cap \db{\bob}''$. 

One issue with \gls{DH}-\gls{PSI} is that it is susceptible to man-in-the-middle attacks \cite{dh-psi-attack}. To mitigate this attack and the leakage of the mutual \gls{DH}-\gls{PSI}'s intersection cardinality,we assume that the transportation is encrypted and authenticated using TLS 1.3. 

\section{PPRL}\label{sec:pprl}

\gls{PPRL} is an \gls{ER} protocol between two parties \party{\alice} and \party{\bob}, with private datasets \db{\alice} and \db{\bob} of sizes \aliceNRecords and \bobNRecords, respectively; these records have a similarity measure \measure{\cdot}{\cdot}, and some additional privacy requirements. These requirements may lead to several security models and several formal definitions of \gls{PPRL}. 

The most intuitive way to define privacy for \gls{PPRL} is by following the \gls{PSI} privacy notion: \party{\alice} only learns $\bobNRecords$ and the intersection $\db{\alice} \cap \db{\bob}$, i.e., all records that exactly match in all fields while \party{\bob} only learns \aliceNRecords. Note that in both \gls{PSI} and \gls{PPRL}, \party{\alice} and \party{\bob} need to share the nature of the information contained in their datasets with each other to decide which \glspl{QID} they can validly compare. 

The difference between \gls{PSI} and \gls{PPRL} is that \gls{PSI} only returns exact matches according to some uniquely identifying \glspl{QID}, while \gls{PPRL} returns matching records up to some similarity indicator and according to non-unique \glspl{QID}. For example, a \gls{PSI} protocol may rely on users' \glspl{SSN}, while a \gls{PPRL} protocol may compare first and last names. Thus, a \gls{PPRL} may inadvertently match ``David Doe'' with ``Davy Don'' even if they represent different entities (users).

\begin{figure}[ht!]
    \centering
    \includegraphics[width=0.7\linewidth]{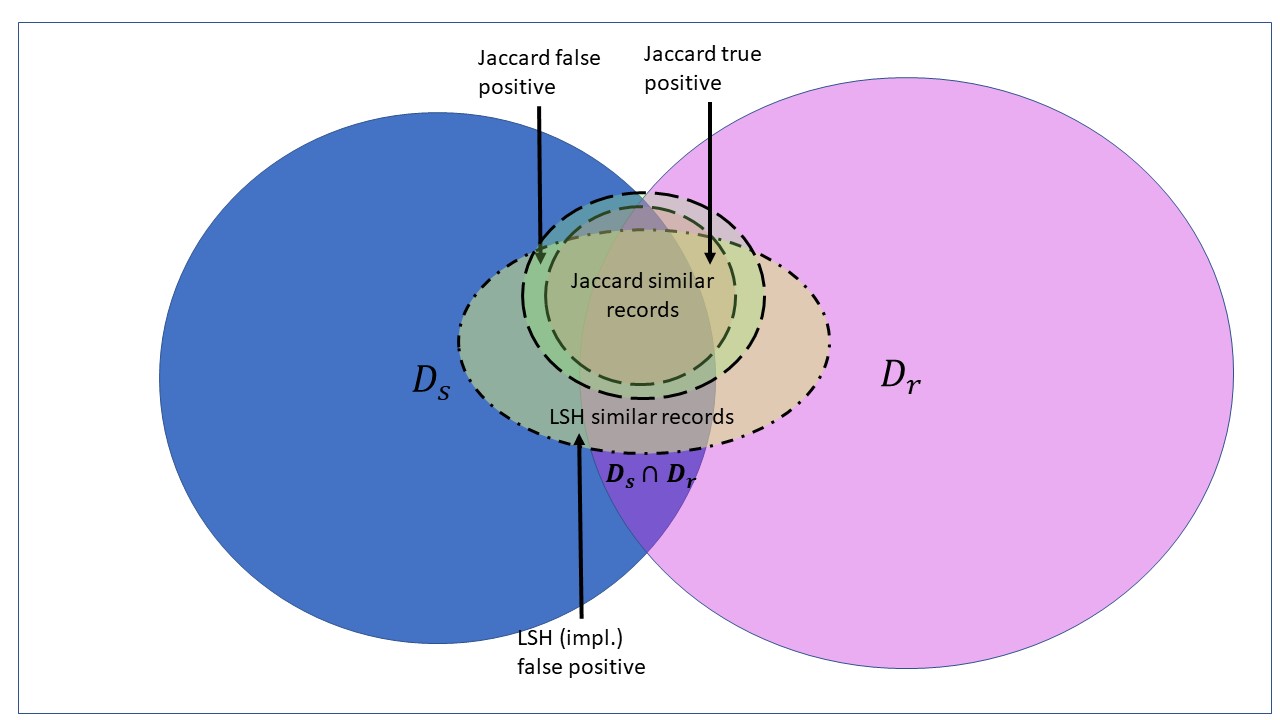}
    \caption{A Venn diagram of different \gls{ER} outputs applied on two datasets \db{\alice} and \db{\bob}. The \gls{ER} methods are: matching only identical pairs of records (purple), matching pairs of records with a Jaccard index above some threshold (green), and matching pairs of records with matching \gls{LSH} indicators (yellow).}
    \label{fig:venn}
\end{figure}

Figure \ref{fig:venn} shows a Venn diagram for the output of different \gls{ER} solutions on \db{\alice} and \db{\bob} datasets. With the exact matching method ($\db{\alice} \cap \db{\bob}$) no privacy risks occur since it only reveals the agreed-upon intersection\footnote{In practice, if \party{\alice} learns that both parties share a record with the same \gls{SSN} and at a later stage learns that the other record fields do not match, then it may deduce that \db{\bob} contains a record with a very close \gls{SSN} that leaks information. Following previous studies, we only consider leaks that occur as a result of the protocol itself.}. In contrast, when using the Jaccard similarity to compute the matches, the parties learn: a) records in $\db{\alice} \cap \db{\bob}$, which is ok; b) records outside $\db{\alice} \cap \db{\bob}$ that represent the same entity (true-positive), which is also ok; c) records outside $\db{\alice} \cap \db{\bob}$ that represent different entities (false-positive), which may break the privacy of the parties. In general, any \gls{PPRL} protocol must assume this kind of leakage, and should do its best to quantify it, e.g., by assuming the existence of a bound $\tau$ on the similarity false-positive rate.

\begin{definition}[\gls{PPRL}]
\label{def:pprl}
A \gls{PPRL} protocol $\mathcal{P}$ between two parties $\party{\alice}$, $\party{\bob}$ with datasets $\db{\alice}$, $\db{\bob}$, respectively, a similarity measure \measureE, a measure indicator \indicatorE{\measureE}{t} for a fixed threshold $t$ with a false-positive rate bounded by $\tau$, has the following properties.
\begin{itemize}
    \item \textbf{Correctness:} $\mathcal{P}$ is correct if it outputs to \party{\alice} the set
    \[
    res = \{ (\aliceElement, Enc(\bobElement)) ~|~ \aliceElement \in \db{\alice}, \bobElement \in \db{\bob}, \indicator{\measureE}{t}{\aliceElement}{\bobElement}=1\},
    \] where $Enc(\bobElement)$ is an encryption of $\bobElement$ under a secret key of \party{\bob}.
    \item \textbf{Privacy:} $\mathcal{P}$ maintains privacy if \party{\alice} only learns $res$ and $\bobNRecords$, and \party{\bob} only learns $\aliceNRecords$.
\end{itemize}
\end{definition}

\begin{corollary}\label{cor:priv}
The leaked information of \party{\bob} in $\mathcal{P}$ is bounded by $\tau \cdot \frac{|res|}{\bobNRecords}$.
\end{corollary}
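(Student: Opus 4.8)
The plan is first to pin down what ``leaked information of $\party{\bob}$'' should mean in light of the discussion preceding Definition~\ref{def:pprl}, and then to bound it by a short counting argument. By the privacy clause of Definition~\ref{def:pprl}, $\party{\alice}$ learns only $res$ and $\bobNRecords$, while $\party{\bob}$ learns only $\aliceNRecords$; hence the sole privacy-compromising content is the part of $res$ consisting of \emph{false-positive} matches, namely pairs $(\aliceElement, Enc(\bobElement)) \in res$ for which $\indicator{\measureE}{t}{\aliceElement}{\bobElement}=1$ although $\aliceElement$ and $\bobElement$ represent different entities (the case flagged as privacy-breaking in the text). I would therefore take the leaked information to be the fraction of $\party{\bob}$'s records that $\party{\alice}$ can single out through such false positives, i.e. the number of distinct records of $\db{\bob}$ appearing in a false-positive pair, divided by $\bobNRecords$.

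With that reading fixed, the bound follows almost immediately. Every element of $res$ corresponds to exactly one reported match, so $|res|$ is the total number of matches. By the standing hypothesis of Definition~\ref{def:pprl}, the false-positive rate among these reported matches is bounded by $\tau$, so at most $\tau \cdot |res|$ of them are false positives. Since each false-positive pair exposes exactly one record $\bobElement \in \db{\bob}$, the number of distinct records of $\party{\bob}$ that leak to $\party{\alice}$ is at most $\tau \cdot |res|$. Dividing by the size $\bobNRecords$ of $\party{\bob}$'s dataset expresses this as a fraction of $\party{\bob}$'s data and yields the claimed bound $\tau \cdot \frac{|res|}{\bobNRecords}$.

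The main obstacle is conceptual rather than computational: the arithmetic is a single normalization step, but the statement is only meaningful once ``leaked information'' is fixed and once $\tau$ is read as the rate of false positives \emph{among the reported matches}, so that it multiplies $|res|$. Reading $\tau$ instead as a rate over all $\aliceNRecords \cdot \bobNRecords$ candidate pairs would replace $|res|$ by $\aliceNRecords \cdot \bobNRecords$ and give a far weaker bound, so this is the point to state carefully. I would also note that counting distinct leaked records rather than leaked pairs can only decrease the tally, so the bound remains valid under either convention.
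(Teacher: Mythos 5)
Your argument is correct and matches the paper's intended justification: the paper states Corollary~\ref{cor:priv} without an explicit proof, and the implicit reasoning is exactly your counting step — at most $\tau \cdot |res|$ of the reported matches are false positives, each exposing one record of \party{\bob}, normalized by \bobNRecords. Your clarifying remark that $\tau$ must be read as the false-positive rate \emph{among reported matches} (not over all candidate pairs) is the right way to resolve the ambiguity the paper itself leaves open.
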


Definition \ref{def:pprl} assumes the existence of $\tau$ but only implicitly uses it. The reason is that $\tau$ does not always exist. In many cases, it can be empirically estimated based on prior data or based on perturbed synthetic data. However, relying solely on empirical estimates increases the ambiguity of the privacy definition for such protocols. Moreover, in many cases, $\tau$ depends on data from the two datasets that have different distributions, which none of the parties know in advance. Another reason for only implicitly relying on $\tau$ is that the leaked information in Cor. \ref{cor:priv} depends on $res$ and can only be computed after running the protocol. 

While $\tau$ bounds the privacy leak from above, there is still the issue of quantifying the exact leakage after the protocol ends. It is not clear how the parties can verify the number of false-positive cases without revealing private data. Usually, an \gls{ER} protocol is used when the compared records do not include uniquely identifying fields (such as an \gls{SSN}) and thus the parties cannot compute the exact matches using PSI. Consequently, their only way to verify matches is by revealing their private data. To assist in this task, we define a protocol called a revealing \gls{PPRL}.

\begin{definition}[Revealing \gls{PPRL}]
\label{def:rpprl}
A revealing \gls{PPRL} protocol $\mathcal{P}$ is a \gls{PPRL} protocol $\mathcal{P}'$, where \party{\bob} also learns
$
u = \{Enc(\bobElement) ~|~ (\aliceElement, Enc(\bobElement)) \in \mathcal{P}'.res\}
$
and \party{\alice} also learns
\[
    res' = \{ (\bobElement, Enc(\bobElement)) ~|~ (\aliceElement, Enc(\bobElement)) \in \mathcal{P}'.res\},
\]
\end{definition}

In words, \party{\bob} learns which of its own records are matched, and \party{\alice} learns the field content of the matched records of the other party. 
The simplest way to achieve a revealing \gls{PPRL} is for \party{\alice} to send $u$ to \party{\bob}, who will then decrypt its values and hand them back to \party{\alice}. The difference between Definitions \ref{def:pprl} and \ref{def:rpprl} is that in the latter, \party{\alice} learns the values of \party{\bob}'s records instead of just their encryption. While this definition leaks more data from \party{\bob} to \party{\alice}, it is easier to analyze because now \party{\alice} can verify the matches with some probability and learn the estimated number of false-positives. 
We also consider the definitions of the associated mutual \gls{PPRL} and the mutual revealing \gls{PPRL}.

\begin{definition}[Mutual \gls{PPRL}]
\label{def:mpprl}
A \gls{PPRL} protocol $\mathcal{P}$ between two parties $\party{\alice}$, $\party{\bob}$ with datasets $\db{\alice}$, $\db{\bob}$, respectively, a similarity measure \measureE, a measure indicator \indicatorE{\measureE}{t} for a fixed threshold $t$ with a false-positive rate bounded by $\tau$, has the following properties.
\begin{itemize}
    \item \textbf{Correctness:} $\mathcal{P}$ is correct if it outputs $res_\alice$ (resp. $res_\bob$) to \party{\alice} (resp. \party{\bob}), where
    \[
    res_\alice = \{ (\aliceElement, Enc(\bobElement)) ~|~ \aliceElement \in \db{\alice}, \bobElement \in \db{\bob}, \indicator{\measureE}{t}{\aliceElement}{\bobElement}=1\}
    \]
    \[
    res_\bob = \{ (\bobElement, Enc(\aliceElement)) ~|~ \aliceElement \in \db{\alice}, \bobElement \in \db{\bob}, \indicator{\measureE}{t}{\aliceElement}{\bobElement}=1\},
    \]
    and $Enc(\bobElement)$ (resp. $Enc(\aliceElement)$) is an encryption of $\bobElement$ (resp. \aliceElement) under a secret key of \party{\bob} (resp. \party{\alice}).
    \item \textbf{Privacy:} $\mathcal{P}$ maintains privacy if \party{\alice} only learns $res_\alice$ and $\bobNRecords$, and \party{\bob} only learns $res_\bob$ and $\aliceNRecords$.
\end{itemize}
\end{definition}

The mutual revealing \gls{PPRL} is similarly defined. The difference between the mutual \gls{PPRL} and the revealing \gls{PPRL} in terms of privacy is that in the mutual \gls{PPRL}, \party{\bob} can match the encryption of \party{\alice} records to its records and therefore gains more information while \party{\alice} only learns the encryption of \party{\bob} records.

In the \gls{PPRL} protocols described above, the two parties learn the intersection of their datasets. However, in some scenarios, the parties merely need to learn the \emph{number} of matches and do not wish to reveal the identity of the matched records to the other party. To this end, we define an N-\gls{PPRL} protocol.

\begin{definition}[N-\gls{PPRL}]
\label{def:npprl}
A \gls{PPRL} protocol $\mathcal{P}$ between two parties $\party{\alice}$, $\party{\bob}$ with datasets $\db{\alice}$, $\db{\bob}$, respectively, a similarity measure \measureE, a measure indicator \indicatorE{\measureE}{t} for a fixed threshold $t$ with a false-positive rate bounded by $\tau$, has the following properties.
\begin{itemize}
    \item \textbf{Correctness:} $\mathcal{P}$ is correct if it outputs to \party{\alice} the value
    \[
    N_{\aliceElement \cap \bobElement} = |\{ (\aliceElement, \bobElement) ~|~ \aliceElement \in \db{\alice}, \bobElement \in \db{\bob}, \indicator{\measureE}{t}{\aliceElement}{\bobElement}=1\}|,
    \]
    \item \textbf{Privacy:} $\mathcal{P}$ maintains privacy if \party{\alice}, (resp. \party{\bob}) only learns $N_{\aliceElement \cap \bobElement},\bobNRecords$ (resp. $\aliceNRecords$).
\end{itemize}
\end{definition}

The mutual N-\gls{PPRL} protocol is similarly defined.

\section{Our solution}\label{sec:solution}

Our \gls{PPRL} solution (hereafter: LSH-PSI PPRL) is an \gls{ER} protocol that uses \lshMatchE as its similarity indicator, where for privacy reasons, the parties cannot directly share the \gls{LSH} results. The reason depends on whether the \gls{LSH} is a preimage-resistant hash function or not. When it is not, \party{\bob} and \party{\alice} can simply inverse the \gls{LSH} results for records that are not in the intersection and reveal private information of \party{\alice}, \party{\bob}, respectively. But even when it is, the solution's privacy depends on the \gls{LSH} input entropy, where the parties can maintain an offline brute force attack against the \gls{LSH} records of the other party.

\begin{figure}[ht!]
    \centering
    \includegraphics[width=0.7\linewidth]{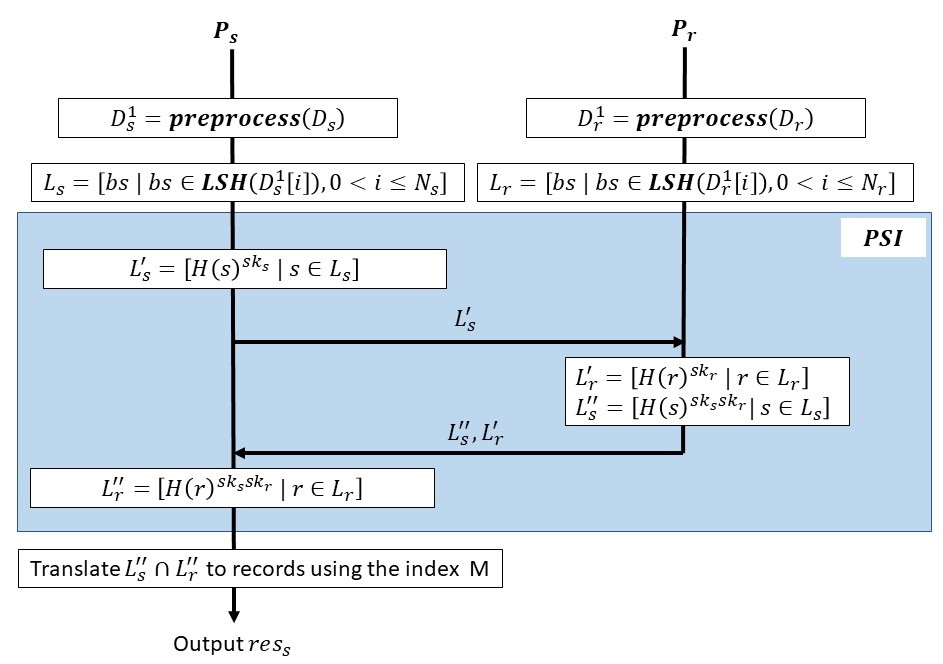}
    \caption{Schematic of the LSH-PSI \gls{PPRL} protocol.}
    \label{fig:illustration2}
\end{figure}

To mitigate the privacy issue, we use a \gls{PSI} protocol. The two parties first compute the \gls{LSH} band signatures of all their records and then apply a \gls{PSI} protocol over these signatures. Finally, \party{\alice} maps back the intersected signatures to the original records to learn the set of similar records. The concrete properties of \lshMatchE can be tuned using the $B$ and $R$ \gls{LSH} parameters. Figure \ref{fig:pprlproto} presents the LSH-PSI protocol, and Figure \ref{fig:illustration2} illustrates it schematically.

Our protocol is defined against semi-honest (honest-but-curious) adversaries, where all parties do not deviate from the protocol, and their inputs are genuine. Nevertheless, they may record and analyze all the intermediate computations and messages from the other parties to get more information. 

\begin{figure}[ht!]
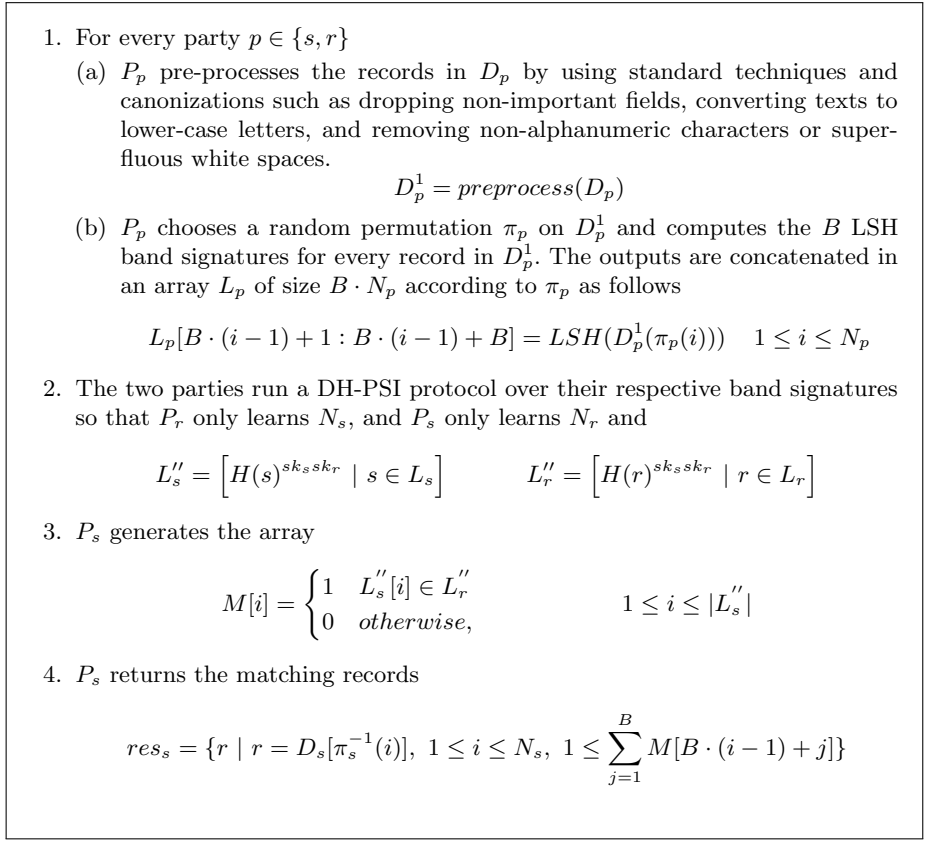

    \centering
\begin{framed}
\begin{enumerate}
\item For every party $p \in \{\alice, \bob\}$

\begin{enumerate}
\item $\party{p}$ pre-processes the records in $\db{p}$ by using standard techniques and canonizations such as dropping non-important fields, converting texts to lower-case letters, and removing non-alphanumeric characters or superfluous white spaces.
\[
\db{p}^1 = preprocess(\db{p})
\]

\item \party{p} chooses a random permutation $\pi_{p}$ on $\db{p}^1$ and computes the $B$ \gls{LSH} band signatures for every record in $\db{p}^1$. The outputs are concatenated in an array $L_p$ of size $B \cdot N_p$ according to $\pi_{p}$ as follows
\begin{align*}
L_p[B\cdot (i-1) + 1: B \cdot (i-1) + B]  = LSH(\db{p}^1(\pi_{p}(i))) 
& & 1 \le i \le N_p
\end{align*}

\end{enumerate}

\item The two parties run a \gls{DH}-\gls{PSI} protocol over their respective band signatures so that \party{\bob} only learns \aliceNRecords, and \party{\alice} only learns \bobNRecords and
\begin{align*}
    L''_\alice = \left[H(\aliceElement)^{sk_\alice sk_\bob} ~ | ~ \aliceElement \in L_\alice\right]    & &
    L''_\bob = \left[H(\bobElement)^{sk_\alice sk_\bob} ~ | ~ \bobElement \in L_\bob\right]
\end{align*}

\item \party{\alice} generates the array 
\begin{align*}
M[i] = \begin{cases}
1 & L^{''}_\alice[i] \in L^{''}_\bob \\
0 & otherwise,
\end{cases}
& & 1 \le i \le |L^{''}_\alice|
\end{align*}

\item \party{\alice} returns the matching records
\[
res_\alice = \{r ~|~ r = \db{\alice}[\pi_\alice^{-1}(i)],~
                    1 \le i \le \aliceNRecords,~
                    1 \le \sum_{j=1}^{B}{M[B\cdot (i-1) + j]} \}
\]
\end{enumerate}
\end{framed}
    \caption{The LSH-PSI PPRL protocol.}
    \label{fig:pprlproto}
\end{figure}

\begin{remark} 
The two parties must use the same preprocessing techniques to increase the efficiency of the underlying \gls{ER} method. In addition, the LSH-PSI protocol assumes that the pre-processing phase runs some deduplication protocol on the dataset of every party. Otherwise, \party{\alice} can extract information from pairs of matching records $\aliceElement_1, \aliceElement_2 \in \db{\alice}$, where $\aliceElement_1$ matches a record in \db{\bob} but $\aliceElement_2$ does not.
\end{remark} 

\begin{remark}
The \gls{PSI} protocol is executed for all records at once and not per record, therefore it is critical to preserve the order of the signatures exchanged between the parties, i.e., of $L^{'}_\alice$ and $L^{''}_\alice$ in Figure \ref{fig:illustration2}. Otherwise, it will be impossible to match the records in Step 4 of Figure \ref{fig:pprlproto}. In Section~\ref{sec:card}, we discuss the case where \party{\bob} does \emph{not} preserve the order of \party{\alice} encrypted signatures.
\end{remark}

The purpose of using the permutation $\pi_p$ in Step 1b is to avoid the case where the other party learns information about ``missing" records. For example, suppose that the records in \db{\bob} are ordered alphabetically according to a first name \gls{QID}, and that \party{\alice} learns that $Jerry$ and $Joseph$ are in the intersection. If $Jerry$ and $Joseph$ happen to belong to adjacent records in \db{\bob}, then an honest but curious \party{\alice} learns that \party{\bob} has no record for $John$. When using a permutation, the only way for \party{\alice} to deduce the same information is by learning all the records in \db{\bob}. A concrete example of Steps 1.b - 3 is given in Appendix \ref{sec:example}.

\begin{theorem}\label{th:pprl}
The LSH-PSI PPRL protocol is a PPRL protocol according to Definition \ref{def:pprl} where the similarity indicator is \lshMatchE. This protocol is secure against semi-honest adversaries.
\end{theorem}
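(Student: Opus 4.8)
The plan is to verify the two requirements of Definition~\ref{def:pprl} separately: correctness, and privacy against a semi-honest adversary through a simulation argument that reduces to the semi-honest security of the underlying \gls{DH}-\gls{PSI}.

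For correctness, I would first observe that Steps~1a--1b are purely local: given its randomness $\pi_p$, each party deterministically produces the array $L_p$ whose $B$ consecutive entries $L_p[B(i-1)+1 : B(i-1)+B]$ are exactly the band signatures of its $i$-th permuted record. The only interaction is the \gls{DH}-\gls{PSI} of Step~2, so I would invoke its correctness to conclude that the array $M$ of Step~3 satisfies $M[i]=1$ iff the $i$-th blinded signature of \party{\alice} lies in \party{\bob}'s blinded signature set. I would then argue that $\sum_{j=1}^{B} M[B(i-1)+j]\ge 1$ holds precisely when some band signature of the record $\db{\alice}[\pi_\alice^{-1}(i)]$ equals some band signature of a \party{\bob} record, and that applying $\pi_\alice^{-1}$ maps this back to the correct original record. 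Up to the negligible probability of a spurious collision between signatures of \emph{different} band indices (ruled out by the collision-resistance of $G$ and the distinctness of the $H_i$), this coincides with $\lshMatch{\aliceElement}{\bobElement}=1$ for some $\bobElement\in\db{\bob}$, so $res_\alice$ realizes the set $res$ of Definition~\ref{def:pprl} for the indicator $\lshMatchE$; I would note that recovering the encrypted partner $Enc(\bobElement)$ in the exact form of $res$ requires instantiating the \gls{PSI} with encrypted payloads.

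For privacy I would use the real/ideal simulation paradigm and build one simulator per corrupted party. Since Steps~1, 3 and 4 exchange no messages, the whole transcript a party sees is the \gls{DH}-\gls{PSI} transcript together with the \gls{PSI} output, so the strategy is to call the \gls{PSI} simulator as a subroutine. For corrupted \party{\bob}, the \gls{PSI} functionality reveals only the length $|L_\alice|=B\cdot\aliceNRecords$, from which $\aliceNRecords$ is recovered; the simulator runs \party{\bob}'s honest local steps and the \gls{PSI} simulator on input length $B\cdot\aliceNRecords$, and indistinguishability follows from the semi-honest security of \gls{DH}-\gls{PSI} under the decisional \gls{DH} and one-more-\gls{DH} assumptions of Appendix~\ref{app:secdef}. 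For corrupted \party{\alice}, the simulator is given $res$ and $\bobNRecords$, rebuilds $L_\alice$ from \party{\alice}'s own input, sets $|L_\bob|=B\cdot\bobNRecords$, and must then feed the \gls{PSI} simulator the intersection pattern that reproduces $M$.

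The main obstacle is exactly this last point. The \gls{DH}-\gls{PSI} functionality naturally discloses to \party{\alice} which of her \emph{individual} band signatures are shared, whereas Definition~\ref{def:pprl} only permits her to learn the \emph{record-level} set $res$; and $M$ cannot in general be recovered from $res$ alone, since a matching record may agree on an arbitrary nonempty subset of its $B$ bands, and which subset depends on \party{\bob}'s private data. I would resolve this on the computational side: by decisional \gls{DH} the blinded signatures in $L''_\bob$ are pseudorandom, so entries of $M$ for non-matched records are $0$ except with negligible probability and carry no information, while the entries for matched records expose only the coincidence of \party{\alice}'s own known signatures with \party{\bob}'s set. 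I would then either argue that this extra band-granularity is subsumed by the $\tau$-bounded leakage already quantified in Corollary~\ref{cor:priv}, or, more cleanly, refine the ideal functionality so that its output to \party{\alice} is the signature-level intersection and check that $res$ is the intended record-level function of it. Pinning down this equivalence, rather than the mechanical reductions to \gls{PSI} correctness and decisional \gls{DH}, is where the real care is needed.
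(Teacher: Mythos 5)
Your proposal is correct, but it is considerably more rigorous than the paper's own proof and takes a genuinely different route. The paper argues correctness in one line (a one-to-one correlation between $L_\alice \cap L_\bob$ and $L''_\alice \cap L''_\bob$, implicitly relying on the collision-resistance and bijectivity facts you spell out) and argues privacy informally: blinded values are ``indistinguishable from random'' under a discrete-log-style assumption, and the intersection \party{\alice} computes is declared to be ``expected by Definition~\ref{def:pprl}.'' You instead run a real/ideal simulation per corrupted party and reduce to the semi-honest security of \gls{DH}-\gls{PSI}, which forces you to confront the exact point the paper elides: the protocol discloses to \party{\alice} the \emph{signature-level} intersection pattern $M$ (which and how many of the $B$ bands matched per record), and this is not computable from the record-level set $res$ that Definition~\ref{def:pprl} authorizes. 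Your concern is legitimate, and the paper itself corroborates it --- Section~\ref{sec:scoring} has \party{\alice} use the number $h$ of matched bands to estimate the Jaccard index, and the N-\gls{PPRL} variant of Section~\ref{sec:card} introduces intra/inter permutations precisely to suppress this band-level information. Of your two proposed resolutions, refining the ideal functionality to output the signature-level intersection is the right one; the alternative of absorbing it into Corollary~\ref{cor:priv} does not quite work, since $\tau$ bounds the rate of falsely \emph{matched records}, not the granularity of information revealed within legitimately matched pairs. Your remark that obtaining $Enc(\bobElement)$ literally requires a payload-carrying \gls{PSI} (or identifying $Enc(\bobElement)$ with the matched blinded signature $H(b_j(\bobElement))^{sk_\alice sk_\bob}$) is likewise a definitional mismatch the paper passes over silently. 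In short: same theorem, but where the paper asserts, you simulate --- and your version exposes (and can repair, via the refined functionality) a leakage the paper's proof implicitly grants itself.
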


\begin{proof}

\textbf{Correctness.} The correctness of the protocol follows from the fact that the intersection $L_{\alice} \cap L_{\bob}$ has a one-to-one correlation with the encrypted band signatures $L''_{\alice} \cap L''_{\bob}$.

\textbf{Privacy of \party{\alice}.} By the discrete-log assumption, \party{\bob} only gets to see $\aliceNRecords$ elements that are indistinguishable from random values. Thus, \party{\bob} only learns $\aliceNRecords$.

\textbf{Privacy of \party{\bob}.}
\party{\alice} gets from \party{\bob} the values of $L'_{\alice}$ and $L_{\bob}$ raised to the power of \party{\bob}'s secret key. By the discrete-log assumption, these values are indistinguishable from random to \party{\alice}. Except that \party{\alice} can raise $L'_{\bob}$ values to the power of its own secret key and then intersect the results with $L''_{\alice}$. This intersection of random values is used by \party{\alice} to identify matching signatures, which is expected by Definition \ref{def:pprl}. Because \party{\alice} learns nothing from values outside the intersection, we say that it only learns $res$ and $\bobNRecords$ as expected.
\qed
\end{proof}

\begin{remark}
Similar to the \gls{DH}-\gls{PSI} case, the use of TLS 1.3 allows the parties to mutually authenticate themselves and to avoid the attack presented in \cite{dh-psi-attack}. Still, as a defense-in-depth mechanism, the parties in every \gls{PPRL} session should avoid reusing secret keys to avoid man-in-the-middle attacks.
\end{remark}

\subsection{PPRL variants}\label{sec:card}

Based on the above protocol, we construct three other protocols: a mutual \gls{PPRL} protocol, an N-\gls{PPRL} protocol, and a revealing \gls{PPRL} protocol, where the latter immediately follows the definition.

\paragraph{A mutual \gls{PPRL} protocol.} To establish a mutual \gls{PPRL} protocol, we modify Step 2 of Figure \ref{fig:pprlproto} to use the mutual \gls{DH}-\gls{PSI} protocol of Section \ref{sec:psi}. The security of the protocol follows from either the security of the mutual \gls{DH}-\gls{PSI}, or from the fact that the mutual protocol is equivalent to running the original \gls{PPRL} protocol twice: first between \party{\alice} and \party{\bob}, and subsequently between \party{\bob} and \party{\alice}. Note that \party{\alice} cannot reduce the communication by sending only records that are in the intersection because then an eavesdropper can learn the intersection size. This claim is valid even when using a secure communication channel (e.g., TLS 1.3).

\paragraph{An N-\gls{PPRL} protocol.} To achieve an N-\gls{PPRL} protocol, we could have simply counted the number of elements in the intersection set $res$, but this would reveal to \party{\alice} more information beyond $N_{\aliceElement \cap \bobElement}$. Instead, we suggest reordering the encrypted band signatures during the \gls{DH}-\gls{PSI} in a way that hides the identity of the matched records but still enables them to be counted. Specifically, we ask \party{\bob} to apply a secret permutation to $L_{\aliceElement}''$ before sending it to \party{\alice}. This permutation has a special property that permutes together the groups of adjacent $B$ signatures that originate from the same record, otherwise, \party{\alice} will not be able to distinguish between the cases
\begin{enumerate}
    \item $|LSH(\aliceElement_1) \cap LSH(\bobElement_1)| = 1$ and $|LSH(\aliceElement_2) \cap LSH(\bobElement_2)| = 1$
    \item $|LSH(\aliceElement_1) \cap LSH(\bobElement_1)| = 2$ and $|LSH(\aliceElement_2) \cap LSH(\bobElement_2)| = 0$
\end{enumerate}
We call the above permutation an intra-permutation of records. In addition, we apply an inter-permutation of records, where we separately permute the $B$ signatures in each group of signatures in $L''_{\aliceElement}$ that originate from the same record.

\section{Experiments}\label{sec:expr}

\subsubsection{Experimental setup.}
We carried out the experiments on two machines that are located in different \glspl{LAN}. We measured an average of $65$ ms round-trip latency between them.
\begin{itemize}
\item Machine $A$ has an Intel\circler~Xeon\circler~CPU E5-2620 v3 @ 2.40GHz, with 12 physical cores and 377 GB of RAM. 
\item Machine $B$ has an Intel\circler~Xeon\circler~CPU E5-2699 v4 @ 2.20GHz, with 44 physical cores and 744 GB of RAM. 
\end{itemize}
We set machine $A$ to run \party{\alice} and machine $B$ to run \party{\bob} with $\aliceNRecords \approx \bobNRecords$.

Our code is written in C++ and runs on Ubuntu 20.04. It uses OpenSSL version 1.1.1f to establish secure TLS 1.3 connections between the two parties. In addition, it uses OpenSSL hash function implementation (concretely, H=\pcr{SHA256}) and \gls{DH} operations (concretely, elliptic curve \gls{DH} operations over the NIST P-256 curve). We report communications in KB and running time in seconds. We also provide a breakdown of the different running time phases: communication and computations per party. For the measurements, we separated the communication phases from the computation phases, which in a real scenario can be pipelined to run in parallel.

For the evaluations, we considered two dataset cases: a) The North Carolina voter register (NCVR) dataset\footnote{\url{https://www.ncsbe.gov/results-data/voter-registration-data, last accessed Mar 2022.}}, which is commonly used for PPRL evaluations; b) a synthetic dataset that we generated and made available in \cite{helayers}.

\paragraph{NCVR datasets.} We used the November 2014 and November 2017 snapshots of the NCVR datasets. Prior to running the PPRL protocol, we deduplicated the snapshots by eliminating duplicate records with identical ``NCID'' or with identical values in the `first\_name', `last\_name', `midl\_name', `birth\_place' and `age' fields. Subsequently, we removed the NCID field from the two snapshots, and ran our PPRL protocol on the two snapshots. A reported matching pair was considered to be a true-positive event if the two reported records share the same NCID value. Table \ref{tab:ncvr} shows the accuracy breakdown of the LSH we used by reporting the number of false-negative (FN), false-positive (FP), and true-positive (TP) events, together with the precision, recall, and F1 results when sampling sets of fixed sizes from the above snapshots. Note that while the precision is high, the absolute number of false-positives may be regarded as too high for some users. See Section~\ref{subsec:optBR} for ways to tune the process and balance the number of false positive and false negative cases while considering the protocol performance. 

\begin{table}[t!]
    \centering
    \setlength{\tabcolsep}{8pt}
    \caption{Accuracy of our PPRL protocol over the NCVR snapshots.}
    \label{tab:ncvr}
    \begin{tabular}{crrrrrr }
        \hline
         Set size & FN & FP & TP & Precision (\%) & Recall (\%) & F1 (\%) \\
         \hline
         $10^4$ & 19 & 21 & 653 & 96.88 & 97.17 & 97.03 \\
        $10^5$ & 1,369 & 1,665 & 55,682 & 97.1 & 97.6 & 97.35 \\
        $10^6$ & 22,233 & 19,365 & 847,724& 97.77 & 97.44 & 97.61\\
        \hline
    \end{tabular}
\end{table}

\paragraph{Synthetic dataset.} We generated two synthetic datasets using IBM InfoSphere\circler{} Optim$^{TM}$ Test Data Fabrication \cite{data-fabrication} with the following fields: `first name',  `last name', `email', `email domain', `address number', `address location', `address line', `city', `state', `country', `zip base', `zip ext', `phone area code', `phone exchange code' and `phone line number', where $\aliceNRecords \approx \bobNRecords \approx 1,000,000$. We generated the datasets in a way that only $100$ records in the two datasets represent identical entities. The pairs of records that describe these shared entities sometimes have identical fields and sometimes fields with minor typos, different styles, and other types of minor differences, which are still small enough to warrant the assumption that the similar records in fact describe the same entity. Our \gls{PPRL} protocol identified all the matching records. The performance evaluation of the protocol is given in Table \ref{tbl:synthetic}.

\begin{table}[t!]
\centering
\caption{Performance results on the synthetic dataset for different samples of the original dataset.}
\label{tbl:synthetic}
\begin{tabular}{ccccc}
    \textbf{Set sizes} & \textbf{Comm. (KB)} & \textbf{Comm. time (s)} & \textbf{Offline time (s)} &  \textbf{Total time (s)}         \\
    \hline
    $2^8$ & $5.68\cdot10^{3}$ & 3 & 1 & 4       \\
    $2^{12}$ & $9.04\cdot10^{4}$ & 17 & 2 & 19       \\
    $2^{16}$ & $1.44\cdot10^{6}$ & 237 & 36 & 273       \\
    $2^{20}$ & $1.19\cdot10^{7}$ & 1,959 & 608 & 2,567       \\
    \hline
\end{tabular}
\end{table}

\section{Conclusion}\label{sec:conc}
We presented a novel PPRL solution that relies on LSH to identify similar records while using PSI to ensure privacy. We formally defined the privacy guarantees that such a protocol provides and evaluated its efficiency. Our results show that it takes $11-45$ minutes (depending on the network settings) to perform a PPRL solution comparing two large datasets with $2^{20}$ records per dataset. Note that none of the results presented in Section~\ref{sec:rel} reported comparable speeds for such large datasets. This makes our solution practical and attractive for companies and organizations. We made our implementation available for testing at \cite{helayers}.

We proposed a PPRL framework that can use different PSI protocols as long as they provide the same security guarantees defined above. We demonstrated our solution using an ECDH PSI protocol. It may be an interesting direction to implement and test the protocol using other solutions that can further improve its performance and overall bandwidth.

\bibliography{main.bib} 
\bibliographystyle{splncs04.bst}

\appendix

\section{Security assumptions}\label{app:secdef}

\begin{definition}[Decisional \gls{DH} (DDH)]
For a cyclic group $\G$, a generator $g$, and integers $a,b,c \in \Z$, the decisional \gls{DH} problem is hard, if for every \gls{PPT} adversary $\A$
\begin{align*}
|Pr[A(g, & g^a, g^b, g^{ab}] = 1) - \\
         & Pr[A(g, g^a, g^b, g^c) = 1]| < negl(),    
\end{align*}
where the probability is taken over $(g, a, b, c)$.
\end{definition}

\begin{definition}[Computational \gls{DH} (CDH)]
For a cyclic group $\G$, a generator $g$, and integers $a,b \in \Z$, the computational \gls{DH} problem is hard, if for every \gls{PPT} adversary $\A$
\[
Pr[A(g, g^a, g^b] = g^{ab}) < negl(),
\]
where the probability is taken over $(g, a, b)$.
\end{definition}

\begin{definition}[One-more-\gls{DH} (OMDH) \cite{one-more}]
Let $\G$ be a cyclic group. The one-more-\gls{DH} problem is hard, if for every \gls{PPT} adversary $\A$ that gets a generator $g \in \G$ together with some power $g^a$ and who has access to two oracles: $h^a = CDH_{g,g^a}(h)$ for some $h \in \G$, and $r \xleftarrow{\$} C()$ a challenge oracle that returns
a random challenge point $r \in G$ and can only be invoked after all calls to the $CDH_{g, g^a}$, it follows that 
\[
Pr[A(g, g^a, r \leftarrow C()) = r^a] < negl()
\]
where the probability is taken over $(g, a)$.
\end{definition}

%%%%%%%%%%%%%%%%%%%%%%%%%%%%%%%%%%%%%%%%%%%%%%%%%

\section{Example of the LSH-PSI protocol}\label{sec:example}

A concrete example of Steps 1.b - 3 of the LSH-PSI PPRL protocol (Figure \ref{fig:pprlproto}) is given in Figure \ref{fig:bands1}. Suppose that $v = H(455)^{sk_\alice sk_\bob}$ then \party{\alice} learns via the \gls{PSI} process that \party{\bob} also has a band signature with the same value $455$. \party{\bob} took care to preserve the order of \party{\alice}'s encrypted band signatures during the \gls{PSI}, so \party{\alice} can map the shared value $v$ back to the band signature for Band $1$ of record $N_\alice$, and deduce that \party{\bob} has some unknown record that is similar to her own record $N_\alice$.

\begin{figure}
    \centering
    \includegraphics[width=\linewidth]{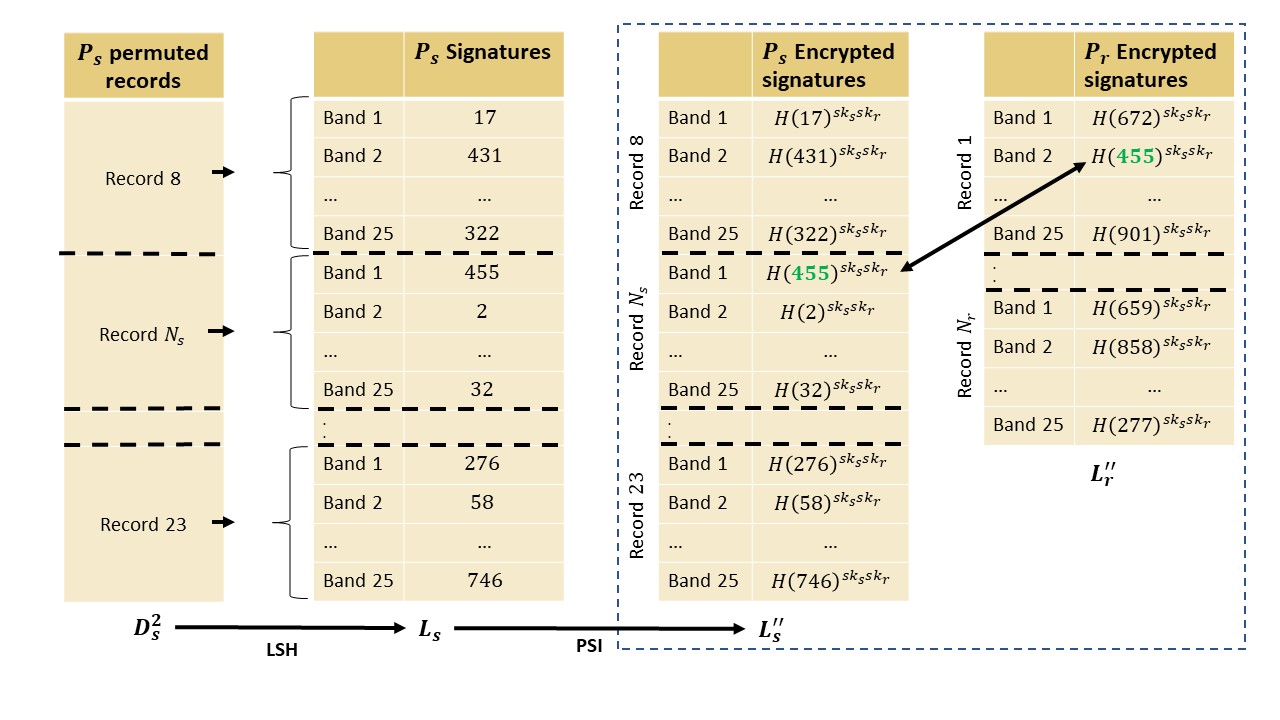}
    \caption{Steps 1.b - 3 of our protocol. \party{\alice} learns via the \gls{PSI} protocol that the signature for Record \aliceNRecords Band $1$ is shared with \party{\bob}.}
    \label{fig:bands1}
\end{figure}

%%%%%%%%%%%%%%%%%%%%%%%%%%%%%%%%%%%%%%%%%%%%%%%%%

\section{Using the Jaccard indicator}\label{sec:jaccard}

Theorem \ref{th:pprl} shows that the LSH-PSI PPRL protocol follows Definition \ref{def:pprl} when considering the LSH as the similarity indicator. This means that security reviewers need to accept the privacy leakage that occurs when using an \gls{LSH}, something that is already done by many organizations that perform RL. However, some reviewers may instead prefer to trust the Jaccard index due to its wide acceptance.

Figure \ref{fig:venn} shows two ways to define \gls{LSH} false-positive events: in relation to exact matches of entire records as in the LSH-PSI PPRL, or in relation to the method of matching pairs of records with a high enough Jaccard index. Thus according to the latter definition an LSH false-positive happens only when a pair of records are matched due to having at least one shared LSH band, and yet they do not have a high enough Jaccard index to justify a claim of similarity. Bounding the false-positive events rate  $\tau'$ based on the latter definition will allow us to define an LSH-PSI PPRL related to the Jaccard index metric but with a different bound $\tau \cdot \tau'$, where $\tau$ is the Jaccard original false-positive bound. In this section, we further discuss the relation between the LSH and the Jaccard index.

For two records $\aliceElement, \bobElement$ with Jaccard index $J$, Figure \ref{fig:P_J_a} shows the probability for an $\lshMatchE=1$ event according to Equation \ref{eq:P_J} with $R=200$ and $B=20$. In standard \gls{ER} solutions, it is the role of the domain expert to decide the specific Jaccard index that would indicate enough similarity between the two records. For example, in the figure the targeted Jaccard index is 0.78. The figure shows the cumulative probability of getting true-positives ($J(\aliceElement, \bobElement) > 0.78$ and $\lshMatch{\aliceElement}{\bobElement}=1$), true-negatives ($J(\aliceElement, \bobElement) \le 0.78$ and $\lshMatch{\aliceElement}{\bobElement}=0$), and the corresponding false-positive and false-negative cumulative probabilities. 

\begin{figure}[ht!]
    \centering
    \includegraphics[width=0.9\linewidth]{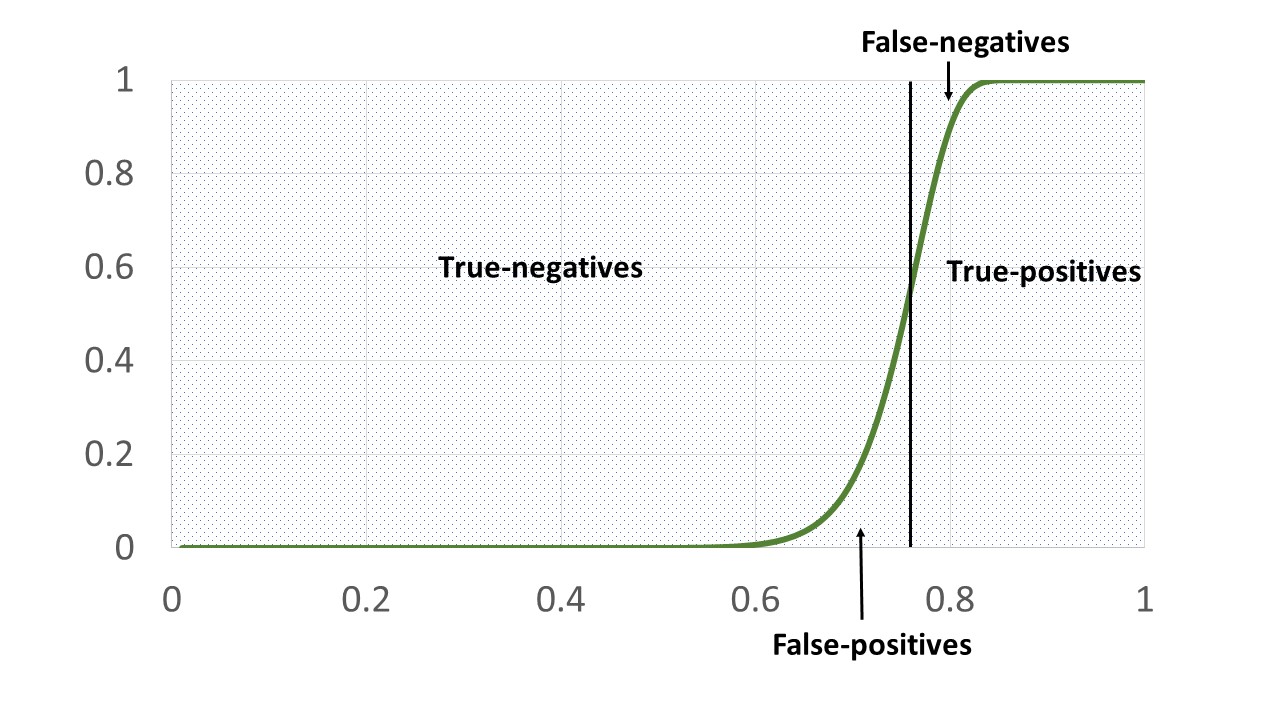}
    \caption{The function $F(J) = 1-(1-J^R)^B$ from Eq. \ref{eq:P_J}, where $R=20$ and $B=200$. The black vertical line is the Jaccard index threshold.}
    \label{fig:P_J_a}
\end{figure}

The above example shows that when $B=20$ and $R=200$, it is possible to close the gap between the Jaccard index and the LSH by choosing the Jaccard threshold to be below $0.5$. In that case, the probability for a false-positive event is less than $0.0001$, which means that one in every ten-thousand records leaks. However, using such a Jaccard threshold will yield many false-positive cases relative to exact record matching, which is less desirable in terms of privacy. 

It turns out that it is possible to tune the slope of the accumulated probability function. Figure~\ref{fig:P_J_b} compares the probability functions in four different setups $B=20, R=200$ (setup 1) $B=100, R=100$ (setup 2) $B=14, R=30$ (setup 3) and $B=120, R=18$ (setup 4). Here, we see that replacing setup 1 with setup 2 allows us to set the Jaccard threshold at 0.78 while reducing the LSH false-positive rate to as low as $10^{-8}$. However, setup 2 dramatically increases the LSH false-negative rate. Note however that false negatives affect the security less than false-positives, and in addition, users are often much more reluctant to report false positives than to miss reports due to false negatives. Setup 2 may also increase the overall performance of the protocol relative to setup 1 because there are many more bands to encrypt and communicate, as described in the following section.

\subsection{Optimizing the protocol}
\label{subsec:optBR}

Setup 4 in Figure~\ref{fig:P_J_b} probably results in more false-positive and false-negative cases than setup 1, and the low slope of the curve implies a larger region of uncertainty. However, the \gls{PSI} for setup 4 runs more than $6$ times faster than the \gls{PSI} for setup 1, because there are just $20$ rather than $180$ band signatures that need to be encrypted and communicated. The change in the $R$ parameter does not affect the performance as much, since it merely determines the number of Min-Hashes that need to be computed locally. It turns out that computing a Min-Hash (like the highly optimized SHA-256 operation) is much faster than computing the power in the the underlying groups of the \gls{DH} protocol. Moreover, there are known methods for quickly producing $R$ different permutations out of a single SHA-256 call such as the Mersenne twister~\cite{carter1979universal}. Finally, the value of $R$ does not affect the size of the communication.

\begin{figure}
    \centering
    \includegraphics[width=0.8\linewidth]{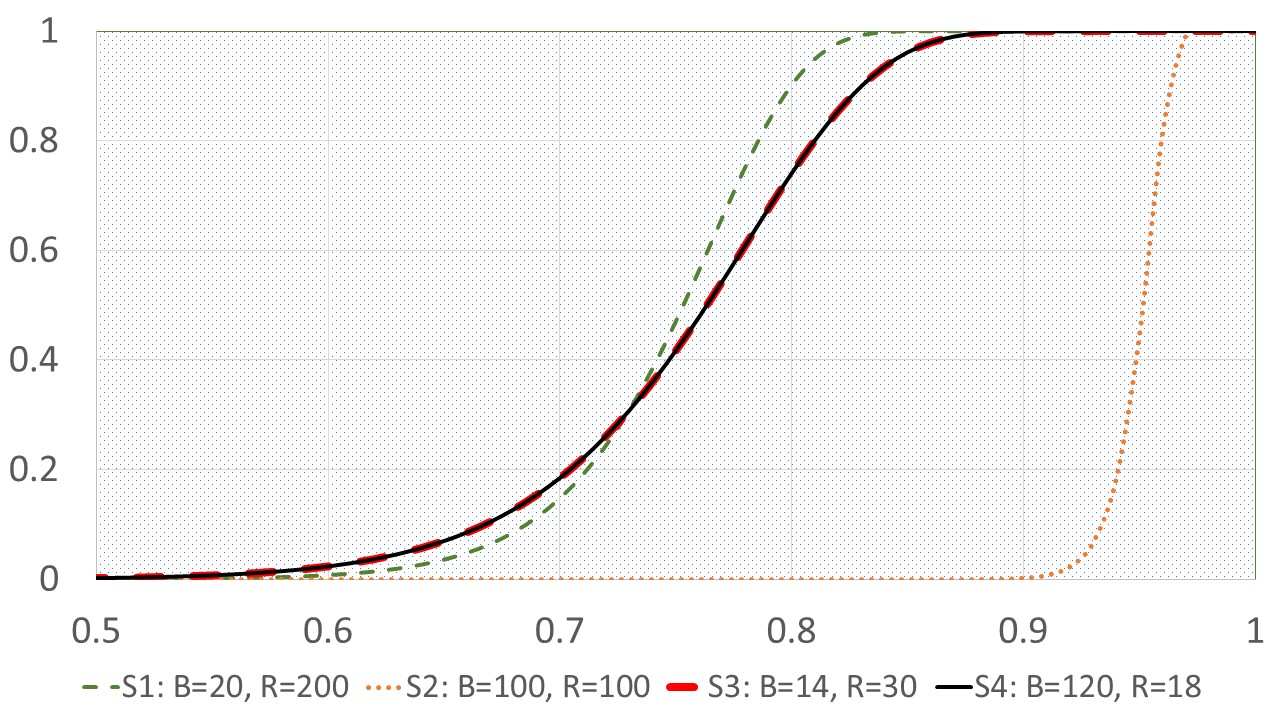}
    \caption{A comparison of four probability functions $F(J) = 1-(1-J^R)^B$ (see Eq. \ref{eq:P_J}) with different $B$ and $R$ values.}
    \label{fig:P_J_b}
\end{figure}

We use the $B$ and $R$ parameters to control the curve, which in turn affects the protocol's accuracy and performance. Reducing $B$ makes it less likely to find a matching band signature, thus increasing the false-negative probability, but improving performance. The rate of false-negatives can be reduced by decreasing $R$, thus making it more probable for two bands to match. Conversely, if the false-positive rate is too high, then one can increase $R$ with little performance penalty. We therefore optimize the process by searching for values of $B$ and $R$ that have the minimal $B$ value (for best performance) while more or less preserving the targeted curve shape. 

Suppose for example that setup 1 has the targeted probability function. The figure shows the probability function for setup 3, which runs almost twice as fast as setup 1 and has an almost identical probability function. Setup 4 has an almost identical curve as setup 3 so it gives an almost identical accuracy, but it runs much slower because it requires almost 8 times more bands. 

\subsection{Scoring the reported matches}\label{sec:scoring}
When a \gls{PPRL} protocol relies on the Jaccard index but its implementation uses LSH, it may be in the users' interest to quantify the number of false-positive events. To this end, we present a way to estimate the Jaccard index based on the LSH results.

\subsubsection{Estimating the Jaccard index for matching pairs}\label{sec:scoring_prob}
When using LSH with $B$ band signatures, it is possible to estimate the actual Jaccard index $J$ by using a binomial confidence interval. By observation \ref{obs:mihashjac}, the probability for a matching band (i.e. the probability for a match in all $R$ Min-Hashes of the band) is $p=J^R$. Suppose that \party{\alice} learns that there are $h$ matching band signatures and $t=B-h$ non-matching band signatures. Using a $95\%$ confidence interval, the Jaccard index lies in the range 
\begin{align}\label{eq:95conf2}
    \left[
    \sqrt[R]{\left|\frac{h}{B} - 1.96 \sqrt{t \frac{h}{B^3}}\right|},
    \sqrt[R]{\left|\frac{h}{B} + 1.96 \sqrt{t \frac{h}{B^3}}\right|}
    \right]
\end{align}

In some cases this interval is too wide, and the users may prefer using a different approach, such as a revealing \gls{PPRL}.
In a revealing \gls{PPRL}, the two parties learn the intersection of their datasets as in a standard \gls{PPRL} but they also learn the records of the other parties that are involved in of the intersection. Thus, the leaked information in a revealing \gls{PPRL} is higher than in a \gls{PPRL}. Below, we propose an approach with privacy leakage that lies between the leakage of a revealing \gls{PPRL} and a \gls{PPRL}, where we compute the Jaccard index only for matching pairs, without revealing the exact shingles. 

\subsubsection{Computing the precise Jaccard index for matching pairs}\label{sec:scoring_prob}
Suppose that at the end of the LSH-PSI PPRL protocol, \party{\alice} learns the matching pair $(\aliceElement, Enc(\bobElement))$. \party{\alice} can ask \party{\bob} to participate in another \gls{PSI} process over the set of shingles of $(\aliceElement, \bobElement)$, where \party{\alice} knows \aliceElement and \party{\bob} knows \bobElement. In this \gls{PSI}, \party{\alice} only learns the intersection size of the associated shingles $|S \cap R|$ and the size $|R|$, so it can compute $J(\aliceElement, \bobElement) = \frac{|S \cap R|}{|S| + |R| - |S \cap R|}$. Note that learning only the intersection size and not the intersection itself makes it harder for \party{\alice} to guess \party{\bob}'s record. 

These additional \glspl{PSI} are relatively expensive in terms of performance, but we only need to carry them out for the reported matches, which are presumably only a very small fraction of all possible pairs of records. \party{\alice} and \party{\bob} can decide to perform such \glspl{PSI} for every matching pair or for selected pairs of special interest, or for pairs selected after estimating the Jaccard index as described above. As mentioned in Section \ref{sec:pprl} performing a selective \glspl{PSI} leaks the size of the selection to an eavesdropper \textbf{and this should be taken into account in the application threats model}.

%%%%%%%%%%%%%%%%%%%%%%%%%%%%%%%%%%%%%%%%%%%%%%%%%

\section{Our implementation}\label{sec:impl}

For reproducibility, we provide concrete details about our LSH implementation. We start by explaining the concept of relative weighting of the record fields.

\subsection{Relative weighting of the record fields}\label{sec:weights}
Some record fields may be more indicative of identity than other fields. 
For example, an \gls{SSN} field is very indicative (though it may also include typos), and a similarity of the full names is more indicative of identity than the similarity of zip codes. A simple method of weighting the effect of the different fields on the matching process is to duplicate the shingles originating from a field for a predefined number of times. We call this number the field weight. For example, consider a PPRL that operates over records with two fields: name and zip code. We use k=6 and k=7 shingles for these fields and set their weights to be 3 and 1, respectively. Then, the 6-shingle `John S' extracted from the name field `John Smith' will be duplicated into three separate shingles `John S1', `John S2', `John S3', whereas the zip-code 7-shingle `2304170' will not be duplicated. 
This causes shingles originating from the name to be three times more likely than zip-code shingles to be the minimum value used by the Min-Hashes of the LSH (see Section \ref{sec:LSH}). This will make the band signatures more likely to match if name shingles are identical than if zip-code shingles are identical.

The problem with this shingle duplication weighting method is that the extra shingles slow down the PPRL process because more shingles need to be hashed by the many Min-Hashes. To this end, we present  a novel method for weighting the shingles, which yields the same results as the shingle duplication method but is much faster. The idea is to reduce the hash value of a shingle according to the shingle's weight, to directly increase its chance of being the shingle that receives the minimal value by the Min-Hashes. 

We view the hash code $h$ of a shingle as a discrete random variable with uniform distribution over some integer range $[0,maxVal]$. Thus, $x=h/maxVal$ is approximately a random variable with a continuous uniform distribution over $[0,1]$. Our method relies on this being a good approximation. 

Our method is as follows: instead of duplicating a shingle $w$ times, we compute the shingle's hash-code $h$, normalize it $x=h/maxVal$, then apply the transformation $y=1-(1-x)^{1/w}$, and finally return back to the original scale $h'=\lfloor y *  maxVal\rfloor$. 
Lemma~\ref{lemma_weighted} shows that this results with a variable $h'$ whose distribution is the same as the minimum of $w$ independent hashes. 

\begin{lemma}
Let $H_1$, $H_2$, \ldots, $H_n$ be i.i.d. random variables with uniform distribution over $[0,1]$. Let $Y=min(H_1,H_2,\ldots,H_w)$.
Then $X=1-(1-H_1)^{1/w}$ has the same distribution as $Y$.
\label{lemma_weighted}
\end{lemma}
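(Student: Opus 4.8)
The plan is to prove the distributional identity by comparing cumulative distribution functions (CDFs), which is the cleanest route for an order statistic like the minimum. First I would compute the CDF of $Y=\min(H_1,\ldots,H_w)$. Since the $H_i$ are i.i.d.\ uniform on $[0,1]$, it is easier to work with the complementary event: for $t\in[0,1]$,
\[
Pr[Y > t] = Pr[H_1 > t \wedge \cdots \wedge H_w > t] = \prod_{i=1}^{w} Pr[H_i > t] = (1-t)^w,
\]
where the product splits by independence and each factor equals $1-t$ because $H_i$ is uniform. Hence $Pr[Y \le t] = 1 - (1-t)^w$ for $t \in [0,1]$, and the CDF is $0$ for $t<0$ and $1$ for $t>1$.

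Next I would compute the CDF of $X = 1-(1-H_1)^{1/w}$ directly from the definition. For $t \in [0,1]$, I would unwind the transformation: $X \le t$ iff $1-(1-H_1)^{1/w} \le t$ iff $(1-H_1)^{1/w} \ge 1-t$. Because both sides are nonnegative and $z \mapsto z^w$ is monotincreasing on $[0,\infty)$, raising to the $w$th power preserves the inequality, giving $1-H_1 \ge (1-t)^w$, i.e.\ $H_1 \le 1-(1-t)^w$. Therefore
\[
Pr[X \le t] = Pr\!\left[H_1 \le 1-(1-t)^w\right] = 1-(1-t)^w,
\]
using once more that $H_1$ is uniform on $[0,1]$ (and noting that the argument $1-(1-t)^w$ indeed lies in $[0,1]$ for $t\in[0,1]$, so the uniform CDF evaluates to the argument itself). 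This matches the CDF of $Y$ on $[0,1]$, and the boundary behavior agrees trivially, so $X$ and $Y$ have the same distribution.

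I do not expect a serious obstacle here; the argument is a short, self-contained CDF computation. The one place demanding a little care is the monotonicity step when inverting the map $h \mapsto 1-(1-h)^{1/w}$: I would make sure to note that $w \ge 1$ so the exponent $1/w$ is in $(0,1]$, that $1-H_1 \ge 0$ so the fractional power is well defined and real, and that the map is a strictly increasing bijection of $[0,1]$ onto itself, which justifies manipulating the inequalities without flipping them. I would also remark that the lemma is stated with $H_1,\ldots,H_n$ but only $H_1,\ldots,H_w$ are used in defining $Y$, and only $H_1$ appears in $X$; the extra variables are irrelevant to the claim, so the proof uses just the distribution of a single uniform together with independence of the first $w$.
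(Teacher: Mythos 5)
Your proof is correct and follows essentially the same route as the paper: both compute the CDF of the minimum, $F_Y(t)=1-(1-t)^w$, and then show $X$ has this same CDF --- the paper by observing that $X=F_Y^{-1}(H_1)$ is the inverse-CDF transform of a uniform variable, you by explicitly unwinding the inequality $X\le t$, which is the same calculation spelled out. Your added care about monotonicity of $h\mapsto 1-(1-h)^{1/w}$ and the vacuous extra variables $H_{w+1},\ldots,H_n$ is sound and only makes explicit what the paper leaves implicit.
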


\begin{proof}
Let $F_H$ be the cumulative distribution function (CDF) of each $H_i$, i.e., $F_H(h)=h$ in the range $[0,1]$. 
Let $F_Y$ be the CDF of $Y$, i.e., $F_Y(y)=1-(1-F_H(y))^w=1-(1-y)^w$ and its inverse is $F_Y^{-1}(p)=1-(1-p)^{1/w}$, so $X=F_Y^{-1}(H_1)$.
The CDF of $X$ is therefore
\[
F_X(x)=P(X\leq x)=P(F_Y^{-1}(H_1)\leq x)=P(H_1\leq F_Y(x)).\]
Since $H_1$ is a uniform variable over $[0,1]$, this means $F_X(x)=F_Y(x)$.
\qed
\end{proof}

We observed a $9\%$ speedup when comparing the computation time (ignoring communications) of our \gls{PPRL} solution using the shingle duplication method versus the above hash-dropping method. 

\begin{remark}
The work in~\cite{Ioffe2010ImprovedCS} also describes a method of computing a `Weighted MinHash' over multisets with duplicated elements, but the universe of all possible items (or dimension for vectors) is assumed to be known in advance.
\end{remark}

\section{LSH description}

We are now ready to describe our \gls{LSH} implementation.
The algorithms below use a data structure that we call the field-group data structure $FG$, which is a list of tuples ($s, k, w$), where $s$ is a string, $k \in \N$ is the shingles length, and $w \in \N$ is a vector with the shingles' weights, respectively. Algorithm \ref{alg:lsh-REC} computes the LSH for a given record $record$. First, it concatenates together strings from fields that belong to the same group according to the configuration variable $conf$ (Lines 5-6). Then, it attaches to every concatenated string the $k,w$ values of its group as defined by $conf$ (Line 7). The algorithm returns the output of the \pcr{LshFG} function on the generated field-group data structure $FG$ (Line 8).

The \pcr{LshFG} algorithm uses the auxiliary functions \pcr{getWeigthedShingles}, which we describe in Algorithm \ref{alg:get-shingles}. Its input is a field-group data structure and its output is a list of pairs of $k$-shingles and their respective weights. 

\begin{algorithm}[ht!]
\caption{Compute the LSH for a given DB record}
\label{alg:lsh-REC}
\begin{algorithmic}[1]
    \Statex \textbf{Input:} $record$, a map of fields to values (strings) and $conf$ a list of tuples $(F, k, w)$ where $F$ is a set of field names, and $k, w \in \N$ are the shingles length and the fields weight, respectively.
    \Statex \textbf{Output:} $lsh=[b_1, b_2, \ldots, b_B]$.
    \Procedure{LSH}{$record$, $conf$}
        \State $FG = \emptyset$
        \For{$t \in conf$}
            \State $s =$``"
            \For{$f \in t.F$}
                \State $s = s ~|~ record[f]$
            \EndFor
            \State $FG = FG \cup (s, t.k, t.w)$
        \EndFor
        
        \State \Return $LshFG(FG)$
    \EndProcedure
\end{algorithmic}
\end{algorithm}

\begin{algorithm}[ht!]
\caption{Returns weighted shingles for given strings}
\label{alg:get-shingles}
\begin{algorithmic}[1]
\Statex \textbf{Input:} $FG$ a field-group data structure.
\Statex \textbf{Output:} $res$ an ordered list of pairs $(sh, w)$ where $sh$ is a string and $w \in \N$.
    \Procedure{getWeigthedShingles}{$FG$}
        \State $res = \emptyset$
        \For{$(s, k, w) \in FG$}
            \State $S$ = \pcr{getShingles}($s$, $k$)
            \Comment Returns an ordered list of the $k$-shingles of $s$.
            \State $res = res.append\left([(sh, w) ~|~ sh \in S]\right)$
        \EndFor
        \State \Return $res$
    \EndProcedure
\end{algorithmic}
\end{algorithm}

\begin{algorithm}[t!]
\caption{Compute the LSH for a given record field group}
\label{alg:lsh-FG}
\begin{algorithmic}[1]
    \Statex \textbf{Constants:} MP = $2^{61} - 1$, a Mersenne prime, and $maxVal = 2^{32}$
    \Statex \textbf{Input:} $h, c, d, w \in \N$.
    \Statex \textbf{Output:} an integer.
    \Procedure{CalcH}{$h, c, d, w$}
        \State $h = \left[h \cdot c + d \pmod{MP}\right] \pmod{maxVal}$
        \State \Return $maxVal \cdot \left(
                    1 - (1 - \frac{h}{maxVal})^\frac{1}{w}\right) $
        \Comment based on Lemma \ref{lemma_weighted}. 
    \EndProcedure
    \Statex

    \Statex \textbf{Input:} $B, R \in \N$, and $FG$ a field-group data structure.
    \Statex \textbf{Output:} $L = [b_1, b_2, \ldots, b_B]$.
    \Procedure{LshFG}{$B, R, FG$}
        \State $C \xleftarrow{\$} \{1,\ldots,MP\}^R$
        \State $D \xleftarrow{\$} \{0,\ldots,MP\}^R$
        \State $wS =$ \pcr{getWeightedShingles}($FG$)
        \For{$b=1, \ldots, B$}
            \State $i = 1$
            \For{$(sh, w) \in wS$}
                \State $H[i++] = ( $\pcr{Trunc}$_{32}($\pcr{SHA256}$(sh)), w)$
            \EndFor
            \For{$r=1,\ldots, R$}
                \State $M[r] = \min_{i}\{$\pcr{CalcH}$(H[i].sh, C[r], D[r], H[i].w)\}$
            \EndFor
            \State $L[b]$ = \pcr{SHA256}$(M)$
        \EndFor
        \State \Return $L$
    \EndProcedure
\end{algorithmic}
\end{algorithm}

Algorithm \ref{alg:lsh-FG} describes the function \pcr{LshFG}, which basically follows the LSH definition. First, the strings are converted to shingles by invoking Algorithm \ref{alg:get-shingles}. The loop of lines 8-14 generates the signature bands in $M$. It starts by computing a $32$-bit hash for every shingle (lines 10-11), and then uses them to construct $R$ different hashes for each of the shingles. The $R$ hash values are then reduced according to the shingle weight using the function \pcr{CalcH}. This function is based on Lemma \ref{lemma_weighted}, where the equation in Line 3 can be modified when $w \le 2$ to avoid the division and save computations. The resulting $R$ minimal hash values are kept in the $M$ array. To generate fast hash values, we replaced the intermediate \pcr{SHA256} calls with a Mersenne twister, which uses random numbers. The algorithm generates and holds these numbers in the arrays $C$ and $D$. Finally, using \pcr{SHA256}, we concatenate and hash the values of $M$ to create the band signature (Line 14).

\end{document}